\documentclass[11pt,a4paper]{article}

\usepackage[utf8]{inputenc}
\usepackage[T1]{fontenc}
\usepackage{lmodern}
\usepackage[margin=1in]{geometry}
\usepackage{microtype} 
\usepackage{amsmath,amssymb,amsthm,amsfonts}
\usepackage{mathtools}
\usepackage{bm}         
\usepackage{bbm}         
\usepackage{braket}       
\newcommand{\openone}{\mathbbm{1}}
\newcommand{\ii}{\mathrm{i}}

\theoremstyle{plain}
\newtheorem{proposition}{Proposition}[section]
\newtheorem{theorem}{Theorem}[section]
\newtheorem{corollary}[theorem]{Corollary}
\usepackage{graphicx}
\usepackage{subcaption}  
\usepackage{booktabs}     
\usepackage{tikz}
\usetikzlibrary{quantikz2} 
\usepackage{authblk}     
\usepackage{xcolor}
\definecolor{quantumblue}{RGB}{83,37,127} 
\usepackage[
  colorlinks=true,
  linkcolor=quantumblue,
  citecolor=quantumblue,
  urlcolor=quantumblue,
  breaklinks=true
]{hyperref}
\usepackage[numbers,sort&compress]{natbib}
\title{Quantum Reference Frames in Quantum Circuits: Perspective-Dependent Entangling Cost and Coherence–Entanglement Trade-Offs}
\author[1]{Salman Sajad Wani}
\author[1]{Saif Al-Kuwari}
\affil[1]{Qatar Center for Quantum Computing, College of Science and Engineering, Hamad Bin Khalifa University, Doha, Qatar}

\begin{document}
\maketitle

\begin{abstract}
The perspective-neutral formulation of quantum reference frames (QRFs) treats observers as quantum systems and describes physics relationally from within the composite system. While frame-change maps and frame-invariant resource sums are theoretically understood, their impact on circuit-based quantum information processing has largely remained unexplored. We formulate QRF transformations as circuit compilation rules and, for systems with finite Abelian symmetry described by the regular representation, derive a gate-level dictionary that maps local operations in one frame to their images in another. This yields a group-theoretic classification of gates where symmetry-commuting operators remain local, up to frame-dependent phases, while generic gates are promoted to controlled entangling operations in which the original frame acts as a control register. The resulting frame-dependence entangling-gate count defines a relational circuit complexity where the cost of a computation depends on the internal reference frame of the observer. We instantiate the framework in a three-qubit model and show that the QRF unitary acts as a lossless converter between a purity-based local coherence and concurrence, preserving their invariant sum and giving a concrete realization of the relativity of entanglement. We implement the corresponding circuits on an IBM Quantum superconducting platform, using full state tomography to reconstruct the redistribution of resources between internal frames. The hardware data reproduce the predicted conversion of local coherence into entanglement, and the observed deviations from exact conservation quantify the effect of realistic device noise on relational quantum protocols.
\end{abstract}

\maketitle

\section{Introduction}

Standard quantum theory is usually formulated relative to an external, classical reference frame—an idealized laboratory that is assumed to be infinite, noninteracting, and perfectly stable. Although this framework underpins most quantum experiments and technologies, it fails in regimes where the reference system is small, quantum, or dynamically coupled to what it describes in relativistic quantum information, distributed quantum networks, and high-precision metrology with finite-size apparatuses \cite{Aharonov1984QuantumFrames,Bartlett2007ReferenceFramesRMP,Loveridge2017Relativity,Angelo2012QRFInertial,Rovelli1991ReferenceSystems,Rovelli1996RelationalQM,Poulin2006RelationalToyModel}. The perspective-neutral program of quantum reference frames (QRFs) addresses this by treating reference frames as physical quantum systems and formulating physics relationally: a single global state admits different internal descriptions, each associated with a different quantum observer \cite{Giacomini2019CovarianceQRF,Vanrietvelde2020ChangePerspective,DeLaHamette2020QRFGeneralGroups,Palmer2014ChangingQRF}. Relativistic and dynamical generalizations extend this framework to N-body systems, spin degrees of freedom, and quantum subsystems in gravitational or accelerated settings, revealing frame-dependent observables and symmetries \cite{Vanrietvelde2020NBODY,Krumm2021QRFTransformations,Ballesteros2021GroupStructureQRF,Giacomini2019RelativisticSpin,Mikusch2021SpinQRF,Ahmad2022QuantumRelativity,ahmad2022relativity,Cepollaro2023GravTimeDilation}.

The kinematic structure of QRFs—how states and observables transform between internal observers—is characterized as a unitary change of relational basis generated by an underlying symmetry \cite{Vanrietvelde2020ChangePerspective,DeLaHamette2020QRFGeneralGroups,Hoehn2020RelationalClocks}. Much of the literature, however, treats frame changes as abstract isometries or channels acting on constrained Hilbert spaces, with an emphasis on covariance properties, relational observables, and links to quantum gravity and relativistic quantum information \cite{Bartlett2007ReferenceFramesRMP,Giacomini2019CovarianceQRF,Ahmad2022QuantumRelativity,ahmad2022relativity,Hoehn2020RelationalClocks,Krumm2021QRFTransformations,Ballesteros2021GroupStructureQRF}. By contrast, the operational consequences for quantum information processing—gate locality, entangling cost, and resource accounting—remain underexplored. Given a standard universal gate set, an important question is how each elementary gate is represented in a different internal frame, when a local gate remains local or becomes controlled, and how entangling structure and circuit depth are redistributed between frames. Addressing these questions is necessary if QRF ideas are to inform algorithm design, resource theories of asymmetry and coherence \cite{baumgratz_quantifying_2014,streltsov_coherence_review_2017,Chitambar2019ResourceTheories,GourSpekkens2008QRFResource,Horodecki2009Entanglement}, and implementations on near-term devices.

In this work, we take an operational view and treat QRF transformations as circuit rewrite rules. We show that the behavior of a gate under a change of frame is determined by its group-theoretic covariance: gates that commute with the group action remain local, while generic gates are compiled into controlled or genuinely entangling operations in which the old reference frame acts as a control register. This circuit-level perspective connects QRF transformations—originally formulated as Hilbert-space isometries and relational observables~\cite{Bartlett2007ReferenceFramesRMP,Giacomini2019CovarianceQRF,Vanrietvelde2020ChangePerspective,DeLaHamette2020QRFGeneralGroups,Palmer2014ChangingQRF,Vanrietvelde2020NBODY,Krumm2021QRFTransformations,Ballesteros2021GroupStructureQRF}—to standard notions of entangling-gate cost and relates the frame dependence of coherence and entanglement to gate-level structure.

Theoretically, we establish a gate-level QRF calculus for finite Abelian groups. Starting from the perspective-neutral change-of-frame map for finite groups~\cite{Vanrietvelde2020ChangePerspective,DeLaHamette2020QRFGeneralGroups,Krumm2021QRFTransformations,Ballesteros2021GroupStructureQRF}, we then derive a transformation law for single-qudit gates acting on systems that carry the regular representation of $G$. This yields a classification of one-qudit operations: symmetry-commuting gates remain local, character-sector gates acquire only frame-dependent phases, and generic gates become genuinely entangling between the frame and the subsystem. In particular, we show that a local superposition in one frame (e.g.\ a Hadamard on a qubit) becomes a controlled-entangling operation in another, providing an operational account of the frame dependence of locality in the language of quantum logic gates. This classification also underpins a notion of relational circuit complexity where the minimal entangling-gate count of a fixed global unitary becomes frame dependent, with an overhead bounded by the number of noncovariant local gates in a chosen decomposition.
We then apply this framework to the trade-off between coherence and entanglement. Building on the resource-theoretic analysis of Cepollaro {et al.}~\cite{Cepollaro2025InvariantSum} and on known complementarity relations for two-qubit pure states~\cite{Englert1996Duality,Bera2015Duality,Wootters1998Concurrence,Horodecki2009Entanglement,baumgratz_quantifying_2014}, we consider a three-qubit $\mathbb{Z}_2$ model and track bipartite entanglement and local single-qubit coherence across different internal frames. For the symmetric family we consider, an appropriate coherence–entanglement sum is conserved, while the frame change converts local coherence in one perspective into bipartite entanglement in another. In the experimental sections, we access this conservation law through a purity-based local coherence measure extracted from tomography on the frame qubit.
Finally, we develop and implement a three-qubit protocol that realizes these ideas as a digital quantum simulation of a finite-group QRF transformation. Unlike continuous-variable QRF and relational models, which often encounter factorization subtleties or require unbounded energy resources~\cite{Ahmad2022QuantumRelativity,DeLaHamette2020QRFGeneralGroups,Hammerer2010Trajectories,Zeuthen2022TrajectoriesComposite,Cepollaro2023GravTimeDilation}, our finite-group construction leads to a finite-dimensional, tensor-factorizable Hilbert space that can be directly mapped to qubits. We specify a three-qubit circuit, built from single-qubit Clifford gates, CNOTs, and SWAPs, that prepares a relational state, implements the $\mathbb{Z}_2$ frame-change unitary as a fixed gate sequence, and performs tomography on selected subsystems to reconstruct $C^2$ and a local coherence measure $D^2$ before and after the frame change. We validate the protocol using density-matrix simulations (Qiskit Aer~\cite{Qiskit}) and execute it on an IBM superconducting device~\cite{iBMQ}. We observe the predicted redistribution of local coherence and bipartite entanglement, confirming the conservation of the invariant resource sum in ideal simulations, and quantifying deviations in the experimental data that are consistent with decoherence and gate infidelity on current NISQ hardware~\cite{Preskill2018NISQ}.

Throughout, we restrict attention to finite Abelian groups and to few-qubit circuits, with the three-qubit $\mathbb{Z}_2$ model providing a minimal setting in which all constructions are explicit and experimentally testable. Within this scope, our results move quantum reference frames from abstract Hilbert-space maps to experimentally implementable circuit identities that directly probe QRF-induced resource trade-offs and relational circuit complexity on near-term processors.

The rest of this paper is organized as follows. Sections~\ref{sec:finite-group-QRF} and \ref{sec:finiteG-gate-calculus} introduce the finite-group QRF framework and derive a gate-level transformation dictionary; for Abelian groups, we obtain a three-class distinction between frame-robust, phase-only, and genuinely entangling gates, and we illustrate it in a three-qubit $\mathbb{Z}_2$ model. Section~\ref{sec:circuits} extends this dictionary to full circuits and introduces a relational entangling-gate complexity with a corresponding overhead bound. Section~\ref{sec:nisq-protocol} implements the $\mathbb{Z}_2$ frame change as a shallow three-qubit circuit and benchmarks the predicted coherence--entanglement redistribution in noiseless simulation and on IBM Quantum hardware using state tomography. Technical derivations are collected in Appendix~\ref{app:finiteG}--Appendix~\ref{app:Z2-operator-identities}.

\section{Finite-Group Quantum Reference Frames}
\label{sec:finite-group-QRF}

In the perspective-neutral formulation of quantum reference frames (QRFs), physics is described globally on a constraint surface of the total Hilbert space, where choosing a frame is implemented as a unitary transformation to a relational perspective~\cite{Giacomini2019CovarianceQRF,Vanrietvelde2020ChangePerspective,DeLaHamette2020QRFGeneralGroups}. Different internal observers are related by frame-change isometries that, in finite-dimensional models, extend to unitary operators on the kinematical tensor-product space. We focus on to systems transforming under a finite group $G$. Each subsystem $X$ has a basis $\{\ket{g}_X\}_{g\in G}$ that transforms under the right-regular representation, $U_R(h)\ket{g}_X = \ket{g h^{-1}}_X$. The total Hilbert space consists of an initial reference frame $0$, a target frame $i$, and a set of registers $R$, so that
\begin{equation}
  \mathcal{H}_{\mathrm{tot}} = \mathcal{H}_0 \otimes \mathcal{H}_i \otimes \bigotimes_{k\in R} \mathcal{H}_k .
\end{equation}
The change of perspective from frame $0$ to frame $i$ is implemented by the unitary~\cite{Vanrietvelde2020ChangePerspective}
\begin{equation}
  U_{0\to i}
  =
  \mathrm{SWAP}_{0,i}
  \sum_{g\in G}
  \ket{g}\!\bra{g}_i
  \otimes \openone_0
  \otimes
  \bigotimes_{k\in R} U_R(g)_k,
  \label{eq:finite-group-QRF-unitary}
\end{equation}
where $\mathrm{SWAP}_{0,i}$ exchanges the frame registers and the controlled operation applies the symmetry transformation $g$ to the remaining systems relative to the new orientation of frame $i$. This map is the quantum analogue of the coordinate shift $x \to x - x_i$ (see Appendix~\ref{app:finiteG-U} for the derivation from the constraint formalism).

We work in the Heisenberg picture to describe how such frame changes act on quantum circuits. Physical operations—measurements, observables, and gates—transform according to
\begin{equation}
  \hat{O}^{(j)}
  =
  U_{i\to j}\,\hat{O}^{(i)}\,U_{i\to j}^\dagger,
  \label{eq:obs-transform}
\end{equation}
which preserves expectation values by construction. Our primary interest lies in the application of this map to quantum logical gates. Consider a gate acting locally on subsystem $S$ in frame $i$, represented globally as $\hat{O}^{(i)} = \openone_{\overline{S}} \otimes \hat{U}_S$. As detailed in the following section, the conjugation in Eq.~\eqref{eq:obs-transform} then provides the dictionary for circuit compilation and, in general, maps local operations to controlled-entangling gates.

\section{Gate-level quantum reference frames for finite Abelian groups}
\label{sec:finiteG-gate-calculus}

Building on the operator map in Eq.~\eqref{eq:obs-transform}, we now pass to the circuit level. We ask: given an operation local to subsystem $S$ in frame $0$, does its image in frame $i$ remain local, or does the frame change turn it into an entangling controlled operation? We first treat a general finite group $G$. Let a local gate acting on subsystem $S\in R$ in frame $0$ be represented globally as $\hat{O}_S^{(0)} = \openone_{\overline{S}} \otimes \hat{U}_S$. Its representation in frame $i$ is given by the Heisenberg conjugate $\hat{O}_S^{(i)} = U_{0\to i}\,\hat{O}_S^{(0)}\,U_{0\to i}^\dagger$. The following theorem shows that this transformation always maps local gates to controlled-unitary operations, with the old frame serving as the control register.

\begin{theorem}[Finite-group gate-transform]
\label{thm:finiteG-gate-transform-main}
Let $G$ be a finite group acting via the right-regular representation $U_R$. Under the change of reference frame $0\to i$ defined by Eq.~\eqref{eq:finite-group-QRF-unitary}, the image of a local unitary $\hat{U}_S$ is
\begin{equation}
  \hat{O}_S^{(i)}
  =
  \sum_{g\in G}
  \ket{g}\!\bra{g}_{\,0}
  \otimes
  \bigl(U_R(g)\,\hat{U}_S\,U_R(g)^\dagger\bigr)_S
  \otimes
  \openone_{\overline{\{0,S\}}}.
  \label{eq:finiteG-gate-transform-main}
\end{equation}
\end{theorem}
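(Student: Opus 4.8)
The plan is to establish Eq.~\eqref{eq:finiteG-gate-transform-main} by a direct two-stage conjugation that exploits the tensor structure of $U_{0\to i}$. First I would factor the frame-change unitary as $U_{0\to i} = \mathrm{SWAP}_{0,i}\,C$, where $C = \sum_{g\in G}\ket{g}\!\bra{g}_i\otimes\openone_0\otimes\bigotimes_{k\in R}U_R(g)_k$ is the controlled symmetry operation with control on frame $i$. Since $\mathrm{SWAP}_{0,i}$ is self-inverse and $C^\dagger = \sum_{g}\ket{g}\!\bra{g}_i\otimes\openone_0\otimes\bigotimes_{k}U_R(g)^\dagger_k$, we have $U_{0\to i}^\dagger = C^\dagger\,\mathrm{SWAP}_{0,i}$, so the conjugation $U_{0\to i}\,\hat{O}_S^{(0)}\,U_{0\to i}^\dagger$ splits into an inner step (conjugation by $C$) followed by an outer relabeling (conjugation by the SWAP).

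For the inner step I would insert the two control sums around $\hat{O}_S^{(0)} = \openone_{\overline{S}}\otimes\hat{U}_S$ and use that the projectors $\ket{g}\!\bra{g}_i$ act on frame $i$ whereas $\hat{U}_S$ acts on $S\in R$, so they commute; the orthogonality relation $\ket{g}\!\bra{g}_i\,\ket{g'}\!\bra{g'}_i = \delta_{g,g'}\,\ket{g}\!\bra{g}_i$ then collapses the double sum to a single sum over $g$. On each spectator register $k\neq S$ the conjugating factors satisfy $U_R(g)_k\,\openone_k\,U_R(g)^\dagger_k = \openone_k$ and drop out, while on $S$ (which is one of the $k\in R$, hence carries a factor $U_R(g)_S$) they produce $\bigl(U_R(g)\,\hat{U}_S\,U_R(g)^\dagger\bigr)_S$. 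This yields $C\,\hat{O}_S^{(0)}\,C^\dagger = \sum_{g}\ket{g}\!\bra{g}_i\otimes\openone_0\otimes\bigl(U_R(g)\hat{U}_S U_R(g)^\dagger\bigr)_S\otimes\openone$.

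Finally I would apply the outer conjugation $\mathrm{SWAP}_{0,i}(\cdot)\,\mathrm{SWAP}_{0,i}$, which acts only on the frame registers: it sends $\ket{g}\!\bra{g}_i\otimes\openone_0 \mapsto \openone_i\otimes\ket{g}\!\bra{g}_0$ and leaves the $S$-block and the remaining identities untouched. Absorbing $\openone_i$ into the complementary identity $\openone_{\overline{\{0,S\}}}$ then reproduces Eq.~\eqref{eq:finiteG-gate-transform-main} exactly. I do not expect a genuine obstacle here: the argument is a clean conjugation whose only content is the commutation of the frame-$i$ projectors with the $S$-local gate together with the triviality of conjugating identities on spectator registers. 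The single point requiring care is the bookkeeping of the control register, which starts on frame $i$ inside $C$ but is relabeled to frame $0$ by the outer SWAP—this relabeling is precisely what makes the \emph{old} frame the control register in the final expression. It is worth noting that nothing in this derivation uses commutativity of $G$, so the identity holds for a general finite group; the Abelian hypothesis enters only in the subsequent classification of when the conjugated gate $U_R(g)\,\hat{U}_S\,U_R(g)^\dagger$ reduces to a phase or to $\hat{U}_S$ itself.
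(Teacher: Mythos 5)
Your proposal is correct and follows essentially the same route as the paper's own proof: the paper also factors $U_{0\to i}=\mathrm{SWAP}_{0,i}\,W_i$ (your $C$ is its $W_i$), collapses the double control sum via projector orthogonality so that spectators drop out and $S$ acquires the conjugated block $U_R(g)\hat{U}_S U_R(g)^\dagger$, and then uses the outer SWAP to relabel the control from frame $i$ to frame $0$. Your closing observation that commutativity of $G$ is never used likewise matches the paper, which states the result for arbitrary finite groups and invokes the Abelian hypothesis only in the subsequent classification.
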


Physically, the classical value $g$ stored in the old frame register $0$ controls which conjugated version of the gate is applied to $S$. The proof (Appendix~\ref{app:finiteG-theorem}) uses the decomposition $U_{0\to i} = \mathrm{SWAP}_{0,i}\,W_i$, where conjugation by $W_i$ dresses the gate with group actions $U_R(g)$ and the subsequent SWAP transfers the control role to the initial frame. This general transformation rule simplifies significantly for finite Abelian groups. In this case, the conjugation action $\alpha_g(\cdot) \coloneqq U_R(g)(\cdot)U_R(g)^\dagger$ defines a commuting family of automorphisms, and the structure of the transformed gate is determined by the orbit $\mathcal{O}_{\hat{U}} = \{\alpha_g(\hat{U}_S)\}_{g\in G}$ of the target unitary. This yields a trichotomy for the behaviour of local gates:

\begin{corollary}[Classification of Abelian gate transformations]
\label{cor:abelian-classification}
Let $G$ be Abelian. Then the transformed gate $\hat{O}_S^{(i)}$ falls into exactly one of the following three cases:
\begin{enumerate}
  \item[(i)]  If $[\hat{U}_S,U_R(g)]=0$ for all $g\in G$, then
  $\hat{O}_S^{(i)}=\openone_0\otimes\hat{U}_S$. The gate remains strictly local on $S$ and decouples from the frame.
  \item[(ii)]  If $\hat{U}_S$ is an eigenoperator of the action $\alpha_g$, i.e., $\alpha_g(\hat{U}_S)=\chi(g)\hat{U}_S$ for a character $\chi$, then
  $\hat{O}_S^{(i)}=V_0(\chi)\otimes\hat{U}_S$, where $V_0(\chi)=\sum_g \chi(g)\ket{g}\!\bra{g}_0$. The frame change induces a local control phase but no entanglement.
  \item[(iii)]  If the orbit $\mathcal{O}_{\hat{U}}$ contains operators that are not related by a global phase, then $\hat{O}_S^{(i)}$ is an entangling controlled unitary across the $0{:}S$ partition.
\end{enumerate}
\end{corollary}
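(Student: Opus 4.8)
The plan is to start from the explicit form of the transformed gate given in Theorem~\ref{thm:finiteG-gate-transform-main}, namely
\begin{equation}
  \hat{O}_S^{(i)}
  =
  \sum_{g\in G}
  \ket{g}\!\bra{g}_{\,0}
  \otimes
  \alpha_g(\hat{U}_S)_S
  \otimes
  \openone_{\overline{\{0,S\}}},
  \label{eq:plan-start}
\end{equation}
and to read off each of the three cases directly from the structure of the conjugated operators $\alpha_g(\hat{U}_S) = U_R(g)\hat{U}_S U_R(g)^\dagger$. Since $G$ is Abelian and acts through the regular representation, the family $\{\alpha_g\}_{g\in G}$ is a commuting family of $*$-automorphisms, so the three cases are mutually exclusive and exhaustive once phrased as conditions on the orbit $\mathcal{O}_{\hat{U}}$, which is exactly the trichotomy we must verify.

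For case (i), I would substitute the commutation hypothesis $[\hat{U}_S,U_R(g)]=0$ into \eqref{eq:plan-start}: commutation gives $\alpha_g(\hat{U}_S)=\hat{U}_S$ for every $g$, so the summand becomes $\ket{g}\!\bra{g}_0\otimes\hat{U}_S$, and summing over $g$ with $\sum_g\ket{g}\!\bra{g}_0=\openone_0$ collapses the control register to the identity, yielding $\hat{O}_S^{(i)}=\openone_0\otimes\hat{U}_S$. For case (ii), I would use the eigenoperator hypothesis $\alpha_g(\hat{U}_S)=\chi(g)\hat{U}_S$, so each summand is $\chi(g)\ket{g}\!\bra{g}_0\otimes\hat{U}_S$; factoring $\hat{U}_S$ out of the $g$-sum leaves $\bigl(\sum_g\chi(g)\ket{g}\!\bra{g}_0\bigr)\otimes\hat{U}_S = V_0(\chi)\otimes\hat{U}_S$, a tensor product and hence non-entangling. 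The key observation here is that both (i) and (ii) produce operators of strict product form across the $0{:}S$ cut, so neither can create entanglement, and (i) is simply the special case $\chi\equiv 1$ of (ii).

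For case (iii), the argument is contrapositive: I would show that whenever $\hat{O}_S^{(i)}$ fails to be an entangling controlled unitary across the $0{:}S$ partition, the orbit must be phase-degenerate, i.e. all $\alpha_g(\hat{U}_S)$ coincide up to a global phase, placing us in case (ii) (or (i)). The natural route is to observe that \eqref{eq:plan-start} is a controlled unitary whose $g$-th block is $\alpha_g(\hat{U}_S)$; such a block-diagonal operator is a product operator across $0{:}S$ precisely when all blocks are proportional to a single fixed unitary. I expect the main obstacle to be making the phrase ``entangling controlled unitary'' precise and then proving the converse cleanly: one must argue that if two orbit elements $\alpha_{g_1}(\hat{U}_S)$ and $\alpha_{g_2}(\hat{U}_S)$ differ by more than a global phase, then the operator genuinely correlates the control and target, for instance by exhibiting a product input state $\ket{+}_0\ket{\psi}_S$ whose image is entangled, or by showing the operator cannot be written as $A_0\otimes B_S$. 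I would close this by noting that a block-diagonal operator $\sum_g\ket{g}\!\bra{g}_0\otimes M_g$ factorizes as $A_0\otimes B_S$ if and only if the $M_g$ are all scalar multiples of a common $B_S$, which is exactly the negation of the orbit condition in (iii); combined with the automorphism property ensuring each $M_g$ is unitary, this establishes that (iii) is both the genuine complement of (i)--(ii) and yields a genuinely entangling operation.
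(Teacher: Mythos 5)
Your proposal is correct and follows essentially the same route as the paper (Appendix~\ref{app:finiteG-abelian}): cases (i) and (ii) by direct substitution of the orbit condition into the controlled-unitary form of Theorem~\ref{thm:finiteG-gate-transform-main}, and case (iii) by showing that a block-diagonal operator $\sum_g\ket{g}\!\bra{g}_0\otimes\alpha_g(\hat{U}_S)$ factorizes across the $0{:}S$ cut only when all blocks are proportional, which is exactly the negation of the orbit hypothesis. The only cosmetic difference is that the paper phrases (iii) as a proof by contradiction against equivalence under local unitaries, whereas you argue the contrapositive directly; both leave the final step (non-product unitary implies entangling on some product input) at the same level of detail.
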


This classification provides a group-theoretic criterion for relational compilation: only gates in the commutant of the regular representation are robust against frame changes, while generic operations are compiled into entangling resources. To illustrate the three cases explicitly, consider the qubit example with $G = \mathbb{Z}_2$. The regular representation is generated by the Pauli operator $U_R(1) = X$, and the gate transformation in Eq.~\eqref{eq:finiteG-gate-transform-main} takes the form (see Appendix~\ref{app:finiteG-Z2} for the derivation as a special case of Theorem~\ref{thm:finiteG-gate-transform-main})
\begin{equation}
  \hat{O}_S^{(i)}
  =
  \ket{0}\!\bra{0}_0 \otimes \hat{U}_S
  +
  \ket{1}\!\bra{1}_0 \otimes \bigl(X \hat{U}_S X\bigr).
  \label{eq:Z2-gate-transform}
\end{equation}
Combining Corollary~\ref{cor:abelian-classification} with Eq.~\eqref{eq:Z2-gate-transform}, the three cases for $G=\mathbb{Z}_2$ are:
\begin{enumerate}
  \item[(i)] Frame-robust case: If $[\hat{U}_S,X]=0$, for example for rotations generated by $X$, $R_x(\theta) = \exp(-\mathrm{i}\theta X/2)$, then $\hat{O}_S^{(i)} = \openone_0 \otimes \hat{U}_S$. A local $R_x$ gate in frame $0$ therefore remains local in frame $i$ and decouples from the reference frame.
  \item[(ii)] Phase-control case: If $\{\hat{U}_S,X\}=0$, for example for the Pauli-$Z$ gate, we have $XZX = -Z$, and Eq.~\eqref{eq:Z2-gate-transform} gives $\hat{O}_S^{(i)} = Z_0 \otimes Z_S$. A local phase operation is promoted to a tensor-product operation controlled by the frame register.
  \item[(iii)] Entangling case: In the generic situation the operators $\hat{U}_S$ and $X \hat{U}_S X$ are not related by a global phase. For the Hadamard gate $H$ one finds $X H X = (X - Z)/\sqrt{2}$, which is linearly independent of $H$, so a local superposition gate in frame $0$ is compiled into an entangling controlled unitary $C(\hat{U}_S, X \hat{U}_S X)_{0S}$ in frame $i$.
\end{enumerate}

\subsection{Model definition and relational perspectives}
\label{subsec:model-def}

Consider three qubits, labeled $A, B$, and $C$. Following the setup in Sec.~\ref{sec:finiteG-gate-calculus}, the group $G=\mathbb{Z}_2$ acts on each qubit via the Pauli-$X$ operator. The total Hilbert space is $\mathcal{H} \cong (\mathbb{C}^2)^{\otimes 3}$. We restrict to the relational subspace satisfying the global charge constraint $\Pi_{\mathrm{phys}} \ket{\psi} = \ket{\psi}$, where the projector is defined by the parity operator
\begin{equation}
  \Pi_{\mathrm{phys}}
  \;=\;
  \frac{1}{2}\bigl( \openone + Z_A Z_B Z_C \bigr).
\end{equation}
The physical subspace $\mathcal{H}_{\mathrm{phys}}$ is therefore spanned by the basis states with even total parity ($a \oplus b \oplus c = 0$):
\begin{equation}
  \mathcal{H}_{\mathrm{phys}}
  \;=\;
  \mathrm{span}\bigl\{
    \ket{000}, \ket{011}, \ket{101}, \ket{110}
  \bigr\}.
  \label{eq:physical-subspace}
\end{equation}
A general pure state in this subspace is a superposition of these basis vectors,
\begin{equation}
  \ket{\Psi}_{ABC}
  \;=\;
  \alpha \ket{000} + \beta \ket{011} + \gamma \ket{101} + \delta \ket{110},
\end{equation}
subject to the normalization $|\alpha|^2 + |\beta|^2 + |\gamma|^2 + |\delta|^2 = 1$. We treat $C$ as the initial laboratory frame. A change of perspective from $C$ to an internal frame $F\in\{A,B\}$ is implemented by the finite-group QRF unitary $U_{C\to F}$ derived in Eq.~\eqref{eq:finite-group-QRF-unitary}. For the $\mathbb{Z}_2$ action, the frame register plays the role of control and the remaining systems the role of targets, and we define the state relative to frame $F$ as
\begin{equation}
  \ket{\Psi}^{(F)} \;\coloneqq\; U_{C\to F}\ket{\Psi}.
\end{equation}
The explicit matrix representations of $U_{C\to F}$ on $\mathcal{H}_{\mathrm{phys}}$ and the verification of subspace invariance are given in Appendix~\ref{app:Z2-matrices}, while Appendix~\ref{app:Z2-operator-identities} contains the detailed derivation of the $\mathbb{Z}_2$ frame-change unitaries and the corresponding operator identities.

To obtain the description from a given frame, we decompose the global state in that frame's computational basis and trace out the frame register. In the initial laboratory frame $C$, the state admits the decomposition
\begin{equation}
  \ket{\Psi}_{ABC}
  \;=\;
  \sum_{k=0}^1 \ket{k}_C \otimes \ket{\chi_k}_{AB},
  \label{eq:chi-decomposition}
\end{equation}
where the conditional states $\ket{\chi_k}_{AB}$ encode the dependence of the system $AB$ on the configuration of the frame. The explicit amplitudes in terms of $\alpha, \beta, \gamma, \delta$ are listed in Appendix~\ref{app:Z2-matrices}. Applying the QRF unitaries $U_{C\to A}$ and $U_{C\to B}$ yields the corresponding descriptions from the perspectives of frames $A$ and $B$:
\begin{align}
  \ket{\Psi}^{(A)}
  &\,=\,
  \sum_{k=0}^1 \ket{k}_A \otimes \ket{\phi_k^{(A)}}_{BC},
  \label{eq:psi-A-decomp} \\
  \ket{\Psi}^{(B)}
  &\,=\,
  \sum_{k=0}^1 \ket{k}_B \otimes \ket{\phi_k^{(B)}}_{AC}.
  \label{eq:psi-B-decomp}
\end{align}
Here, the vectors $\ket{\phi_k^{(F)}}$ represent the conditional state of the remaining systems given that the reference frame $F$ is in state $\ket{k}$.

The relational description seen by an observer in frame $F$ is given by the reduced density matrix of the remaining degrees of freedom. Tracing out the reference system in each of the three perspectives yields
\begin{align}
  \rho^{(C)}_{AB} \;&=\; \sum_{k} \ket{\chi_k}\!\bra{\chi_k}_{AB}, \label{eq:rho-AB-C} \\
  \rho^{(A)}_{BC} \;&=\; \sum_{k} \ket{\phi_k^{(A)}}\!\bra{\phi_k^{(A)}}_{BC}, \label{eq:rho-BC-A} \\
  \rho^{(B)}_{AC} \;&=\; \sum_{k} \ket{\phi_k^{(B)}}\!\bra{\phi_k^{(B)}}_{AC}. \label{eq:rho-AC-B}
\end{align}
These reduced states $\rho^{(F)}$ contain all information accessible to the internal observer $F$ and will serve as the starting point for quantifying the frame dependence of coherence and entanglement.

\subsection{Resource conservation and trade-offs}
\label{subsec:coh-ent-diagnostics}

To quantify how quantum resources are redistributed between frames, we track the trade-off between the bipartite entanglement of an effective system pair and the local coherence of a distinguished subsystem. We quantify entanglement using the squared Wootters concurrence $C^2(\rho)$~\cite{Wootters1998Concurrence}. For a pure two-qubit state $\ket{\psi}_{XY}$, this reduces to the linear entropy of the marginals,
\begin{equation}
  C^2(\ket{\psi}_{XY}) \;=\; 4 \det(\rho_X).
\end{equation}
Locally, we decompose the state of a single qubit $S$ using the Bloch vector $\bm{r}=(r_x, r_y, r_z)$, such that $\rho_S = \frac{1}{2}(\openone + \bm{r}\cdot\bm{\sigma})$. We identify two complementary local resources:
\begin{align}
  D^2(\rho_S) &\;\coloneqq\; r_x^2 + r_y^2 = 4|\rho_{01}|^2, \label{eq:D2-def}\\
  P^2(\rho_S) &\;\coloneqq\; r_z^2 = (\rho_{00} - \rho_{11})^2. \label{eq:P2-def}
\end{align}
Here, $D^2$ measures the local coherence (squared $\ell_1$-norm) in the computational basis, while $P^2$ quantifies the population predictability.

For pure bipartite states, the entanglement implies mixedness in the subsystems ($C^2 = 1 - \|\bm{r}_X\|^2$), leading to the complementarity relation~\cite{jakob2010Complementarity}
\begin{equation}
  C^2(\rho_{XY}) \;+\; D^2(\rho_X) \;+\; P^2(\rho_X) \;=\; 1.
  \label{eq:full-complementarity}
\end{equation}
A self-contained derivation of $C^2 + D^2 + P^2 = 1$ and its specialization to $C^2 + D^2 = 1$ for pure two-qubit states is given in Appendix~\ref{app:coh_conc}. In the three-qubit $\mathbb{Z}_2$ model, we focus on a symmetric family of states where the local populations of the singled-out qubit are frame-invariant. Specifically, for states where the local subsystem is maximally mixed in the $Z$-basis ($P^2=0$), Eq.~\eqref{eq:full-complementarity} simplifies to the frame-invariant conservation law
\begin{equation}
  C_F^2 \;+\; D_F^2 \;=\; 1, \qquad F \in \{A,B,C\}.
  \label{eq:C2+D2=1}
\end{equation}
This relation captures the QRF resource conversion: a fixed quantum resource appears either as local superposition or as nonlocal entanglement, depending on the internal observer's perspective.

We now apply these diagnostics to the symmetric three-qubit model. For each internal frame $F \in \{A,B,C\}$, we identify a system pair $S_{\text{pair}}$ for entanglement monitoring and a local subsystem $S_{\text{loc}}$ for coherence monitoring. We track a one-parameter trajectory of physical states 
$\ket{\Psi(\lambda)}$, whose explicit form is given in Appendix~\ref{app:coh_conc}, and use the coherence–concurrence complementarity derived there. Evaluating the reduced states for each observer yields an explicit realization of the complementarity relation. While the partition of resources is frame-dependent, the total resource $C_F^2(\lambda) + D_F^2(\lambda)$ remains invariant across all three perspectives:
\begin{equation}
  C_F^2(\lambda) \;+\; D_F^2(\lambda) \;=\; 1,
  \qquad \forall \lambda, \quad \forall F \in \{A,B,C\}.
  \label{eq:C2+D2=1-lambda}
\end{equation}
This result confirms that the QRF transformation $U_{C\to F}$ behaves as a lossless resource converter. The specific choices of subsystems and the resulting resource flows are summarized in Table~\ref{tab:C2-D2-frames}. This analytical prediction yields a clear experimental signature where a continuous redistribution of $C^2$ and $D^2$ that always sums to unity. We implement this protocol on superconducting hardware in Sec.~\ref{sec:nisq-protocol}.

\begin{table}[t]
  \centering
  \caption{Frame-dependent resource distribution. For each frame $F$, we define the bipartite cut for concurrence ($C_F^2$) and the subsystem for local coherence ($D_F^2$). For fixed $F$ and $\lambda$, the conservation law Eq.~\eqref{eq:C2+D2=1-lambda} ensures that any decrease in local coherence is exactly compensated by an increase in entanglement.}
  \label{tab:C2-D2-frames}
  \begin{tabular}{c c c l}
    \toprule
    Frame ($F$) & Pair ($S_{\text{pair}}$) & Local ($S_{\text{loc}}$) & Resource Behavior \\
    \midrule
    $C$ & $A$--$B$ & $B$ & Entanglement increases with $\lambda$ \\
    $A$ & $B$--$C$ & $B$ & Coherence dominates; $C_A^2 \to 0$ \\
    $B$ & $A$--$C$ & $A$ & Intermediate exchange \\
    \bottomrule
  \end{tabular}
\end{table}

The three-qubit $\mathbb{Z}_2$ model provides a concrete operational realization of the relativity of entanglement. The conservation law in Eq.~\eqref{eq:C2+D2=1-lambda} shows that a change of internal frame acts as a resource interconversion: the QRF unitary redistributes a fixed quantum budget between local superposition and nonlocal correlations. Crucially, the identity of the resource-bearing subsystem is frame-dependent. A feature that manifests as bipartite entanglement to observer $C$ may be compiled into local coherence for observer $A$. This gives a precise operational meaning to the statement suggesting that entanglement is a relational attribute rather than an absolute property of the state~\cite{Bartlett2007ReferenceFramesRMP,Giacomini2019CovarianceQRF}. Furthermore, the restriction to finite groups circumvents the technical pathologies associated with continuous-variable QRFs, such as factorization anomalies or non-normalizable reference states~\cite{Ahmad2022QuantumRelativity,DeLaHamette2020QRFGeneralGroups}. In our model, the frame change is a well-defined unitary on a finite-dimensional, tensor-factorizable Hilbert space, making the QRF transformation directly accessible as a gate sequence on digital quantum processors. Consequently, the choice of internal frame becomes a programmable design parameter—a feature we exploit in the experimental protocol of Sec.~\ref{sec:nisq-protocol} to probe the resource trade-off on NISQ hardware.


\section{Quantum circuits in different reference frames}
\label{sec:circuits}

 A unitary process decomposed in frame $i$ as a sequence of elementary gates, $V_{\mathrm{proc}}^{(i)} = U_L \cdots U_1$, transforms into frame $j$ via global conjugation. The linearity of the QRF map implies that the transformation distributes over the gate sequence:
\begin{equation}
  V_{\mathrm{proc}}^{(j)}
  \;=\;
  U_{i\to j} \left( \prod_{k=1}^L U_k^{(i)} \right) U_{i\to j}^\dagger
  \;=\;
  \prod_{k=1}^L \left( U_{i\to j} \, U_k^{(i)} \, U_{i\to j}^\dagger \right).
  \label{eq:circuit-compilation}
\end{equation}
Equation~\eqref{eq:circuit-compilation} provides a concrete procedure for relational circuit compilation. While the global dynamics represent a passive transformation (invariant physics), the circuit topology is frame-dependent. Applying the dictionary derived in Sec.~\ref{sec:finiteG-gate-calculus} re-compiles a circuit of local gates in frame $i$ into a sequence of potentially entangling operations in frame $j$. The structural complexity of the circuit—defined by its entangling-gate count—is therefore relative to the observer.

\subsection{Relational compilation in the \texorpdfstring{$\mathbb{Z}_2$}{Z2} model}
\label{subsec:relational-compilation-Z2}

We first use Eq.~\eqref{eq:circuit-compilation} on the three-qubit $\mathbb{Z}_2$ model and a standard Bell-preparation circuit defined in frame $C$. The process $V_{\text{Bell}}^{(C)}$ acts on the input $\ket{000}$ via a Hadamard on $A$ followed by a $\mathrm{CNOT}_{A\to B}$, preparing the state $\ket{\Phi^+}_{AB}\otimes\ket{0}_C$. This textbook circuit is shown in Fig.~\ref{fig:canonical-circuit-three-frames}(a), where the frame qubit $C$ remains idle.

To describe this same physical process from the perspective of internal frame $B$, we compile the unitary $V_{\text{Bell}}^{(C)}$ using the conjugation rule
\begin{equation}
  U \;\mapsto\; U_{C\to B}\, U \, U_{C\to B}^\dagger .
\end{equation}
Using the gate dictionary from Sec.~\ref{sec:finiteG-gate-calculus}, we find that the CNOT gate is frame-robust because it commutes with the relevant parity operator. In contrast, the local Hadamard gate $H_A$ is generic: in frame $B$ it appears as an entangling controlled operation between the frame register $B$ and the system qubit $A$ (see Appendix~\ref{app:Z2-operator-identities}). Figure~\ref{fig:canonical-circuit-three-frames}(b) shows the compiled circuit, in which the frame qubit actively participates in the dynamics. Similarly, moving to frame $A$ produces a circuit in which the entangling gate acts on the $BC$ pair; the corresponding decomposition is shown in Fig.~\ref{fig:canonical-circuit-three-frames}(c). Taken together, these examples show that changing the QRF rewrites the circuit: the global unitary remains the same, but its decomposition into local and entangling gates depends on the observer.

\begin{figure*}[t]
  \centering
  \subcaptionbox{Frame $C$: standard Bell preparation on $AB$. The frame $C$ is idle.}[0.3\textwidth]{%
    \begin{quantikz}
      \lstick{$C$ (frame)} & \qw & \qw & \qw \\
      \lstick{$A$} & \gate{H} & \ctrl{1} & \qw \\
      \lstick{$B$} & \qw & \targ{} & \qw
    \end{quantikz}
  }\hspace{1em}
  \subcaptionbox{Frame $B$: the local $H_A$ is compiled into a controlled gate with the frame register $B$ as control.}[0.3\textwidth]{%
    \begin{quantikz}
      \lstick{$B$ (frame)} & \ctrl{1} & \qw & \qw \\
      \lstick{$A$} & \gate{U_H} & \ctrl{1} & \qw \\
      \lstick{$C$} & \qw & \targ{} & \qw
    \end{quantikz}
  }\hspace{1em}
  \subcaptionbox{Frame $A$: the entangling resources are shifted to the $BC$ subsystem, controlled by frame $A$.}[0.3\textwidth]{%
    \begin{quantikz}
      \lstick{$A$ (frame)} & \ctrl{1} & \qw \\
      \lstick{$B$} & \gate{U_{BC}} & \qw \\
      \lstick{$C$} & \qw & \qw
    \end{quantikz}
  }
  \caption{Relational circuit compilation for the $\mathbb{Z}_2$ model. The same physical process (Bell preparation) is decomposed into different gate sequences depending on the observer's reference frame. The local resources in frame $C$ (a) are recompiled into entangling resources involving the frame register in frames $B$ (b) and $A$ (c).}
  \label{fig:canonical-circuit-three-frames}
\end{figure*}
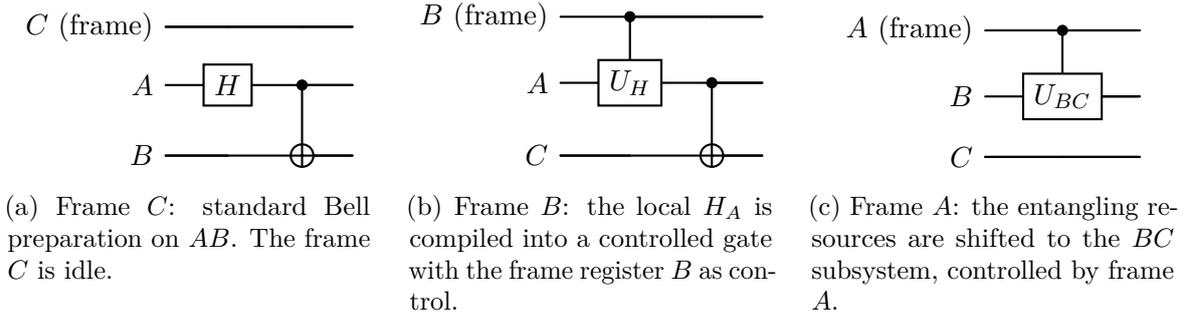

The Bell example illustrates a general pattern. For a multi-qubit subsystem \(S\), the global group action is implemented by the parity operator
\begin{equation}
  X_S \coloneqq \bigotimes_{q\in S} X_q.
\end{equation}
Under a frame change \(0\to i\), a local gate \(U_S\) transforms according to the multi-qubit generalization of Eq.~\eqref{eq:Z2-gate-transform} (see Appendix~\ref{app:finiteG-Z2} for the derivation as a direct application of the finite-group gate-transform theorem to composite subsystems):
\begin{equation}
  U_S^{(i)}
  =
  \ket{0}\!\bra{0}_0 \otimes U_S
  +
  \ket{1}\!\bra{1}_0 \otimes (X_S U_S X_S).
  \label{eq:multi-qubit-transform}
\end{equation}
This induces a simple compilation dictionary for standard circuit elements. The \(\mathrm{CNOT}_{AB}\) gate commutes with the global parity operator \(X_A \otimes X_B\) (commuting an \(X\) through the target of a CNOT induces an \(X\) on the control), so CNOT gates remain local in all frames. Likewise, any phase rotation generated by Pauli-\(X\) or products thereof (such as \(R_{xx}(\theta)\)) commutes with \(X_S\) and is frame-robust. By contrast, as established in Sec.~\ref{sec:finiteG-gate-calculus}, the Hadamard gate \(H\) does not commute with \(X\) and transforms into a controlled operation \(C(H,Z)_{0S}\), which entangles the frame register with the system. This distinction—that CNOTs and \(X\)-generated phases are frame-invariant while Hadamards are not—drives the structural rearrangement of the Bell circuit observed in Fig.~\ref{fig:canonical-circuit-three-frames} and underlies the notion of relational circuit complexity in Sec.~\ref{subsec:relational-complexity}.

\subsection{Relational circuit complexity}
\label{subsec:relational-complexity}

The frame-dependence of the gate decomposition motivates the notion of {relational circuit complexity}. Since global symmetries are fixed, we quantify complexity via the count of two-qubit entangling primitives, $N_{\mathrm{ent}}$.
Consider a fixed decomposition of the processing unitary $U_{\mathrm{proc}}$ in frame $0$. We partition the set of local single-qubit gates into covariant operations $\mathcal{G}_{\mathrm{cov}}$, which satisfy $X U X^\dagger = e^{\ii\phi} U$, and generic operations $\mathcal{G}_{\mathrm{gen}}$, where the conjugation yields a linearly independent unitary. Under a frame change $0\to i$, elements of $\mathcal{G}_{\mathrm{cov}}$ remain local (up to control phases), while elements of $\mathcal{G}_{\mathrm{gen}}$ turn into controlled-unitary gates that entangle the frame register with the system. Consequently, the entangling cost in the new frame is bounded by
\begin{equation}
  N_{\mathrm{ent}}^{(i)}
  \;\le\;
  N_{\mathrm{ent}}^{(0)} \;+\; |\mathcal{G}_{\mathrm{gen}}|.
  \label{eq:entangling-overhead-bound}
\end{equation}
This inequality quantifies the additional entangling cost induced by changing frames: the overhead is strictly determined by the number of symmetry-breaking local gates in the original decomposition. The Bell-preparation circuit (Fig.~\ref{fig:canonical-circuit-three-frames}) saturates this bound. In frame $C$, the circuit uses one entangling gate (CNOT) and one generic local gate (Hadamard), so $N_{\mathrm{ent}}^{(C)}=1$ and $|\mathcal{G}_{\mathrm{gen}}|=1$.
In frames $A$ and $B$, the Hadamard is compiled into a controlled operation, raising the cost to $N_{\mathrm{ent}}^{(A,B)} = 2$.
For larger architectures, this suggests a relationally aware compilation strategy: reducing the use of non-covariant gates in the logical circuit reduces the worst-case entangling overhead across different internal perspectives.

\section{Experimental validation}
\label{sec:nisq-protocol}


In this section we designate frame $A$ as the laboratory frame (in contrast to Sec.~\ref{sec:circuits}, where $C$ played the role of an external frame for the Bell circuit). The initial physical state is prepared from the vacuum state $\ket{000}_{ABC}$ via the circuit $V_{\mathrm{prep}} = X_C H_B$, yielding
\begin{equation}
  \ket{\Psi_{\mathrm{lab}}}
  \;=\;
  \ket{0}_A \ket{+}_B \ket{1}_C.
  \label{eq:Psi-lab-def}
\end{equation}
Using the diagnostics established in Sec.~\ref{subsec:coh-ent-diagnostics}, this state exhibits maximal local coherence on subsystem $C$ ($D_C^2=1$) and vanishing entanglement across the $A$--$C$ partition ($C_{AC}^2=0$).

To switch to the perspective of internal frame $B$, we apply the frame-change unitary $U_{A\to B}$ obtained from the finite-group QRF map in Eq.~\eqref{eq:finite-group-QRF-unitary}. For the specific subspace on which the initial frame $A$ is fixed in state $\ket{0}$, the general map $U_{A\to B}$ compiles into a compact gate sequence:
\begin{equation}
  U_{A\to B}
  \;\cong\;
  \mathrm{SWAP}_{A,B} \cdot \mathrm{CNOT}_{B\to C},
  \label{eq:U-A-to-B-experiment}
\end{equation}
(see Appendix~\ref{app:Z2-operator-identities} for the derivation from the general finite-group map).
Acting on $\ket{\Psi_{\mathrm{lab}}}$, this transformation produces the state
\begin{equation}
  \ket{\Psi_{\mathrm{qrf}}}
  \;=\;
  \frac{1}{\sqrt{2}}\bigl(\ket{001} + \ket{100}\bigr).
\end{equation}

From the perspective of frame $B$, obtained by tracing out the reference register $B$ from $\ket{\Psi_{\mathrm{qrf}}}$, the resource distribution is inverted: subsystem $C$ becomes maximally mixed ($\rho_C^{(B)} = \openone/2 \implies D_C^2=0$), while the $A$--$C$ pair becomes a maximally entangled Bell state ($C_{AC}^2=1$). We numerically simulated this protocol using Qiskit Aer~\cite{Qiskit}. The results confirm the exact conservation of the total resource sum between the two perspectives:
\begin{equation}
  \Sigma_{\text{res}}^{(B)} - \Sigma_{\text{res}}^{(A)}
  \;=\;
  (C_{AC}^2 + D_C^2)_B - (C_{AC}^2 + D_C^2)_A
  \;=\;
  (1 + 0) - (0 + 1)
  \;=\; 0.
\end{equation}
This ideal response establishes a baseline for hardware performance. Since the protocol uses only three qubits and a shallow depth ($\sim 4$ CNOTs including SWAP decomposition), it is well-suited to probing resource redistribution on NISQ devices.
\subsection{Experimental setup}
\label{sec:exp-methods}

The protocol was executed on an IBM Quantum superconducting processor. The hardware requirements are modest: only three qubits with linear connectivity and a standard Clifford+CNOT gate set are needed. The state preparation and frame-change unitary $U_{A\to B}$ [Eq.~\eqref{eq:U-A-to-B-experiment}] give a total circuit depth of four CNOTs (including the SWAP decomposition), which is shallow enough to preserve coherence on present-day NISQ devices.

To reconstruct the resource measures, we performed full quantum state tomography (QST) on the three-qubit system in both frames. This required $3^3=27$ Pauli measurement settings per frame, for a total of 54 configurations. With a shot budget of $N_{\text{shots}} \sim 10^3$ per basis, the statistical uncertainty is negligible compared to hardware gate errors. While partial tomography of the marginals $\rho_C$ and $\rho_{AC}$ would slightly reduce the measurement overhead, full reconstruction allows us to monitor global purity and verify that the dynamics remain confined to the physical subspace.

The experimental signature of the QRF transformation is the predicted inversion of resources: a successful implementation should show a statistically significant decrease in local coherence $D_C^2$ together with the appearance of concurrence $C_{AC}^2$. Deviations from the conservation law $C^2+D^2=1$ serve as a benchmark for device noise (specifically decoherence and state mixing) rather than a breakdown of the relational formalism.
\subsection{Experimental results}
\label{sec:exp-results}

We benchmark the protocol on three platforms: theoretical analysis , noiseless simulation (Qiskit Aer), and the 27-qubit IBM superconducting device \texttt{ibm\_fez}. Table~\ref{tab:theta-pi-over-2} summarizes the reconstructed resource measures for the equal-superposition state ($\theta=\pi/2$). In the noiseless simulation, the reconstructed values track the theoretical predictions with high fidelity. In frame $A$, we find $D_C^2 \approx 0.987(2)$ and $C_{AC}^2 \approx 0$, confirming the separability of the lab-frame state. In frame $B$, the roles of the resources are reversed, with $D_C^2 \approx 0$ and $C_{AC}^2 \approx 0.973(6)$. The total resource sum $C^2+D^2 \approx 0.98$ deviates from unity by an amount consistent with finite-sampling bias ($N_{\text{shots}}=10^3$) inherent to maximum-likelihood estimation.

On the \texttt{ibm\_fez} device, the qualitative signature of resource redistribution is still visible despite hardware noise. Frame $A$ exhibits high local coherence ($D_C^2 \approx 0.96$) and negligible entanglement. After the frame change to $B$, the local coherence vanishes ($D_C^2 \approx 0.00$) and significant entanglement emerges ($C_{AC}^2 \approx 0.74$).
The reduction of the total invariant to $\approx 0.74$ in frame $B$ reflects the decoherence accumulated during the four-CNOT frame-change circuit. However, the relative weights of local coherence and entanglement clearly invert, indicating that the QRF unitary interconverts coherence and entanglement as expected.

\begin{table}[t]
  \centering
  \caption{Experimental validation of resource redistribution. We compare analytic predictions with results from noiseless simulation (\texttt{Aer}) and superconducting hardware (\texttt{ibm\_fez}). Uncertainties represent the standard deviation over repeated tomography runs. The hardware results demonstrate a clear inversion of resources between frames $A$ and $B$, despite the non-unit total sum due to decoherence.}
  \label{tab:theta-pi-over-2}
  \begin{tabular}{llccc}
    \toprule
    Frame & Platform & Local Coherence ($D_C^2$) & Concurrence ($C_{AC}^2$) & Total ($D^2+C^2$) \\
    \midrule
    $A$ & Theory & 1.000 & 0.000 & 1.000 \\
        & \texttt{Aer} & $0.987 \pm 0.002$ & $0.000 \pm 0.000$ & $0.988$ \\
        & \texttt{ibm\_fez} & $0.963 \pm 0.005$ & $0.000 \pm 0.000$ & $0.963$ \\
    \midrule
    $B$ & Theory & 0.000 & 1.000 & 1.000 \\
        & \texttt{Aer} & $0.000 \pm 0.000$ & $0.973 \pm 0.006$ & $0.973$ \\
        & \texttt{ibm\_fez} & $0.001 \pm 0.000$ & $0.743 \pm 0.032$ & $0.744$ \\
    \bottomrule
  \end{tabular}
\end{table}

We illustrate the complexity bound in Eq.~\eqref{eq:entangling-overhead-bound} using the Bell-preparation circuit defined in Sec.~\ref{sec:circuits}. In the laboratory frame $C$, the decomposition $V_{\mathrm{proc}}^{(C)} = \mathrm{CNOT}_{AB} H_A$ contains exactly one entangling primitive ($N_{\mathrm{ent}}^{(C)} = 1$) and one symmetry-breaking local gate ($H_A \in \mathcal{G}_{\mathrm{gen}}$). As shown in Sec.~\ref{subsec:relational-complexity}, the transformation to frame $B$ promotes the Hadamard to a controlled operation. The explicit circuit in Fig.~\ref{fig:canonical-circuit-three-frames}(b) then requires two entangling gates ($N_{\mathrm{ent}}^{(B)} = 2$), saturating the relational overhead bound
\begin{equation}
  N_{\mathrm{ent}}^{(B)}
  \;=\;
  N_{\mathrm{ent}}^{(C)} \;+\; |\mathcal{G}_{\mathrm{gen}}|
  \;=\; 2.
\end{equation}

To check that this complexity gap is intrinsic to the algorithm and not an artifact of the logical gate set, we transpiled both frame descriptions to the native basis of the \texttt{ibm\_fez} device (echoed cross-resonance). We found that the physical two-qubit gate count remained invariant under compilation: the frame-$C$ circuit compiled to a single hardware entangler, while the frame-$B$ circuit required two. Taken together, these observations show that the entangling cost of relationality is a structural property of the circuit: the difference in entangling gate count between frames $C$ and $B$ survives compilation to the native hardware gate set.


\section{Conclusion}
\label{sec:conclusion}

In this paper, we make the perspective-neutral framework of quantum reference frames operational by treating frame changes as circuit compilation rules. Considering systems transforming under finite Abelian groups, we derive a gate-level dictionary that maps local operations in one frame to their images in another. This construction connects the abstract kinematics of QRFs~\cite{Vanrietvelde2020ChangePerspective,DeLaHamette2020QRFGeneralGroups,Giacomini2019CovarianceQRF,Krumm2021QRFTransformations,Ballesteros2021GroupStructureQRF} to the concrete gate sequences used in quantum information processing. The main message is that the locality of a quantum gate is fixed by its group-theoretic covariance. For Abelian symmetries, we obtain a classification of gates into three classes: frame-robust operations in the commutant of the representation, character-sector gates that acquire only frame-dependent phases, and generic gates that are compiled into controlled-entangling operations. This mechanism realises the ``relativity of entanglement'' as a trade-off between local coherence and nonlocal correlations, consistent with invariant-sum relations between coherence and entanglement under QRF changes~\cite{Cepollaro2025InvariantSum,Horodecki2009Entanglement,Bera2015Duality,Englert1996Duality}. In the $\mathbb{Z}_2$ model, we show explicitly that this trade-off obeys the conservation law $C^2 + D^2 = 1$ across all internal perspectives. Our hardware implementation on an IBM Quantum processor demonstrates that QRF transformations can be implemented and probed on present-day NISQ devices~\cite{Preskill2018NISQ,Kandala2017HardwareEfficient,Vandersypen2005NMRQC}.

These results open several directions for future work. First, the notion of relational circuit complexity---where the entangling cost of a computation depends on the observer---suggests a resource theory combining ideas from asymmetry and coherence~\cite{baumgratz_quantifying_2014,streltsov_coherence_review_2017,Chitambar2019ResourceTheories} with the resource theory of quantum reference frames~\cite{GourSpekkens2008QRFResource}. Such a framework could optimize quantum algorithms by selecting the reference frame that minimizes entangling gate counts. Second, extending the gate calculus to non-Abelian groups or continuous symmetries (e.g., $\mathrm{SU}(2)$) is key to applying these tools to relativistic quantum information and clock-synchronization problems~\cite{Ahmad2022QuantumRelativity,ahmad2022relativity,Giacomini2019CovarianceQRF,Mikusch2021SpinQRF}. Finally, the interplay between frame-dependent resources and metrological precision remains largely unexplored. Identifying whether relational entanglement offers advantages in multi-parameter estimation would link QRFs to quantum sensing and relativistic metrology applications~\cite{Hammerer2010Trajectories,Zeuthen2022TrajectoriesComposite,Cepollaro2023GravTimeDilation}.

\appendix
\bibliographystyle{unsrt}      
\bibliography{main}

\section{Finite-group QRF map and gate-transform theorem}
\label{app:finiteG}

In this appendix, we collect the technical details underlying the finite-group quantum reference frame (QRF) calculus used in Sec.~\ref{sec:finiteG-gate-calculus}. We first recall the standard perspective-neutral change-of-frame unitary for finite groups and check that it implements the classical relational map. We then derive the finite-group gate-transform theorem, which states that a local gate in one frame becomes a controlled gate in another, with the old frame playing the role of a control register. Finally, we consider finite Abelian groups and to the $\mathbb{Z}_2$ model used in our circuit analysis.

Throughout, we follow the notation and conventions of Refs.~\cite{Vanrietvelde2020ChangePerspective,DeLaHamette2020QRFGeneralGroups,Giacomini2019CovarianceQRF}.

\subsection{Setting and notation}
\label{app:finiteG-setup}

Let $G$ be a finite group with identity element $e$ and order $|G|$. Each physical system $k$ carries a copy of the (right) regular representation of $G$, acting on a Hilbert space
\begin{equation}
\mathcal{H}_k \cong \mathbb{C}^{|G|},
\qquad
\mathcal{H} \;=\; \bigotimes_{k=0}^{n-1} \mathcal{H}_k ,
\end{equation}
with computational basis $\{\ket{g}_k : g\in G\}$ satisfying
\begin{equation}
U_R(h)\ket{g}_k = \ket{g h^{-1}}_k, \qquad \forall\,g,h\in G.
\end{equation}
We label one distinguished system as the old reference frame $0$, a second as the new frame $i$, and the remaining systems as ``particles'' or registers
\begin{equation}
R := \{1,\dots,n-1\}\setminus\{i\}.
\end{equation}

In the perspective-neutral construction~\cite{Vanrietvelde2020ChangePerspective,DeLaHamette2020QRFGeneralGroups}, the relational description of the composite system in frame $0$ is obtained by imposing the gauge condition that system $0$ sits at the group identity, so that a relational basis state takes the form
\begin{equation}
\ket{\vec g}^{(0)} \;=\;
\ket{e}_0 \bigotimes_{k\in R\cup\{i\}} \ket{g_k^{(0)}}_k,
\qquad
g_k^{(0)}\in G.
\end{equation}
In a description relative to frame $i$, the same physical configuration is represented by a state in which system $i$ is at the identity and all other labels are shifted by the group element $g_i^{(0)}$ that previously described the position of $i$ in frame $0$. Classically, the frame change
$0\to i$ acts on the tuple of group elements by
\begin{equation}
\Phi_{0\to i}:\ 
\bigl(e,\, g_i^{(0)},\, (g_k^{(0)})_{k\in R}\bigr)
\longmapsto
\bigl( (g_i^{(0)})^{-1},\, e,\, (g_k^{(0)}(g_i^{(0)})^{-1})_{k\in R} \bigr),
\label{eq:classical-relational-map}
\end{equation}
that is, the new frame is fixed at the identity, the old frame label becomes $(g_i^{(0)})^{-1}$, and all other systems are right-multiplied by $(g_i^{(0)})^{-1}$.

Quantum mechanically, this classical rule is implemented by a unitary $U_{0\to i}$ on $\mathcal{H}$ that maps relational states in frame $0$ to relational states in frame $i$ and acts linearly on superpositions~\cite{Vanrietvelde2020ChangePerspective,Giacomini2019CovarianceQRF}.

\subsection{Finite-group QRF unitary}
\label{app:finiteG-U}

We now recall the standard operator form of the finite-group QRF transformation and check that it implements the classical map~\eqref{eq:classical-relational-map} on the non-reference systems.

\begin{proposition}[Finite-group QRF unitary]
\label{prop:finiteG-U}
Let each system carry the right-regular representation of a finite group $G$. The coherent change of reference frame from $0$ to $i$ is implemented by the unitary
\begin{equation}
U_{0\to i}
\;=\;
\mathrm{SWAP}_{0,i}
\Biggl(
\sum_{g\in G}
\ket{g}\!\bra{g}_{\,i}
\;\otimes\;
\openone_{0}
\;\otimes\;
\bigotimes_{k\in R} U_R(g)_k
\Biggr),
\label{eq:U-finiteG-def}
\end{equation}
where $R=\{1,\dots,n-1\}\setminus\{i\}$ and $U_R(g)_k$ denotes the right-regular representation on system $k$.
On the relational basis states $\ket{\vec g}^{(0)}$ it acts as
\begin{equation}
U_{0\to i}
\Bigl(
\ket{e}_0 \ket{g_i^{(0)}}_i \bigotimes_{k\in R} \ket{g_k^{(0)}}_k
\Bigr)
=
\ket{g_i^{(0)}}_0 \ket{e}_i \bigotimes_{k\in R} \ket{g_k^{(0)} (g_i^{(0)})^{-1}}_k,
\end{equation}
i.e., the new frame is fixed at the identity, and all non-reference systems are right-multiplied by $(g_i^{(0)})^{-1}$.
\end{proposition}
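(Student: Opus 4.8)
The plan is to verify the stated action by direct evaluation on an arbitrary relational basis state, after first confirming that $U_{0\to i}$ is unitary. Since $U_{0\to i}$ is presented as a product of two explicit factors—the controlled group action $W_i := \sum_{g\in G}\ket{g}\!\bra{g}_i \otimes \openone_0 \otimes \bigotimes_{k\in R}U_R(g)_k$ followed by $\mathrm{SWAP}_{0,i}$—the cleanest route is to apply these two factors in sequence to the input state $\ket{e}_0\ket{g_i^{(0)}}_i\bigotimes_{k\in R}\ket{g_k^{(0)}}_k$ and read off the result, then note that linearity fixes the action on all of $\mathcal{H}$.

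First I would apply $W_i$. Because register $i$ is in the computational basis state $\ket{g_i^{(0)}}_i$, the projector $\ket{g}\!\bra{g}_i$ annihilates every term except $g=g_i^{(0)}$, so $W_i$ reduces to $\openone_0 \otimes \bigotimes_{k\in R}U_R(g_i^{(0)})_k$ on this state. Invoking the defining relation $U_R(h)\ket{g}_k=\ket{g h^{-1}}_k$ with $h=g_i^{(0)}$, each register $k\in R$ is sent to $\ket{g_k^{(0)}(g_i^{(0)})^{-1}}_k$, while register $0$ is untouched owing to the explicit $\openone_0$. Applying $\mathrm{SWAP}_{0,i}$ then exchanges the contents of registers $0$ and $i$, moving $g_i^{(0)}$ into slot $0$ and $e$ into slot $i$, which reproduces the claimed formula. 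Unitarity follows because $\mathrm{SWAP}_{0,i}$ is manifestly unitary and $W_i$ is an orthogonal direct sum, over the computational eigenspaces of register $i$, of the unitaries $\bigotimes_{k\in R}U_R(g)_k$, hence unitary; the composite is therefore unitary on the full kinematical space.

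The main point requiring care is the bookkeeping of the old-frame register against the classical relational map in Eq.~\eqref{eq:classical-relational-map}. The SWAP deposits the raw label $g_i^{(0)}$ into register $0$, whereas the classical rule $\Phi_{0\to i}$ places the old frame at $(g_i^{(0)})^{-1}$. I would resolve this by emphasizing that the proposition only asserts agreement with the classical map \emph{on the non-reference systems}: the relational content accessible to the new frame $i$ is the configuration of the $R$ registers, which the computation shows are uniformly right-multiplied by $(g_i^{(0)})^{-1}$, exactly as in $\Phi_{0\to i}$. The value stored in register $0$ is a matter of convention—an inversion may be absorbed into how one reads the old-frame slot, and for self-inverse elements (as in the $\mathbb{Z}_2$ model used later) the distinction collapses entirely—so I would flag it explicitly but not let it obstruct the verification. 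Beyond this, the only genuine pitfalls are keeping the operator order correct (applying $W_i$ before $\mathrm{SWAP}_{0,i}$, since reversing the two scrambles which register plays the control role) and using the convention $U_R(h)\ket{g}=\ket{gh^{-1}}$ consistently so that the group multiplication lands on the correct side.
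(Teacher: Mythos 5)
Your proposal is correct and follows essentially the same route as the paper's proof: the factorization $U_{0\to i}=\mathrm{SWAP}_{0,i}\,W_i$, direct evaluation on a relational basis state (with the projector collapsing the sum to $g=g_i^{(0)}$), the orthogonal-block argument for unitarity, and the appeal to linearity for superpositions. Your handling of the $g_i^{(0)}$ versus $(g_i^{(0)})^{-1}$ bookkeeping in register $0$ matches the paper's remark that the two conventions differ only by the bijective relabelling $g\mapsto g^{-1}$ on the old-frame slot.
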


\begin{proof}
Define
\begin{equation}
W_i
\;:=\;
\sum_{g\in G}
\ket{g}\!\bra{g}_{\,i}
\;\otimes\;
\openone_{0}
\;\otimes\;
\bigotimes_{k\in R} U_R(g)_k,
\qquad
U_{0\to i}=\mathrm{SWAP}_{0,i}\,W_i.
\end{equation}
Unitarity of each $U_R(g)$ and orthogonality of the projectors $\{\ket{g}\!\bra{g}_i\}$ imply $W_i^\dagger W_i = W_i W_i^\dagger = \openone$, so $W_i$ and hence $U_{0\to i}$ are unitary.

Acting on a relational basis state, we first apply $W_i$:
\begin{align}
W_i
\Bigl(
\ket{e}_0 \ket{g_i^{(0)}}_i \bigotimes_{k\in R} \ket{g_k^{(0)}}_k
\Bigr)
&=
\sum_{g\in G}
\ket{g}\!\bra{g}_{\,i}
\ket{g_i^{(0)}}_i
\ket{e}_0
\bigotimes_{k\in R} U_R(g)_k \ket{g_k^{(0)}}_k
\nonumber\\
&=
\ket{g_i^{(0)}}_i \ket{e}_0
\bigotimes_{k\in R} U_R(g_i^{(0)})_k \ket{g_k^{(0)}}_k
\nonumber\\
&=
\ket{g_i^{(0)}}_i \ket{e}_0
\bigotimes_{k\in R} \ket{g_k^{(0)}(g_i^{(0)})^{-1}}_k,
\end{align}
where in the last step we used the definition of the right-regular representation. Applying the swap $\mathrm{SWAP}_{0,i}$ exchanges systems $0$ and $i$, giving
\begin{equation}
U_{0\to i}
\Bigl(
\ket{e}_0 \ket{g_i^{(0)}}_i \bigotimes_{k\in R} \ket{g_k^{(0)}}_k
\Bigr)
=
\ket{g_i^{(0)}}_0 \ket{e}_i
\bigotimes_{k\in R} \ket{g_k^{(0)}(g_i^{(0)})^{-1}}_k.
\end{equation}
This shows that the quantum map implements the classical relational shift on all non-reference systems, while assigning the label $g_i^{(0)}$ to the old frame. This choice differs from using $(g_i^{(0)})^{-1}$ only by a bijective relabelling $g\mapsto g^{-1}$ on system $0$. Linearity of $U_{0\to i}$ then ensures that superpositions of basis states are transformed coherently.
\end{proof}

The inverse transformation $U_{i\to 0}=U_{0\to i}^\dagger$ has the same structure with $g$ replaced by $g^{-1}$; for finite groups, this is equivalent to re-indexing the sum over the group.

\subsection{Local operator and gate transformation}
\label{app:finiteG-operator}

We now study how a local operator acting on some system $S\in R$ transforms under a change of quantum reference frame. The fundamental observable map is the Heisenberg-picture relation~\cite{Vanrietvelde2020ChangePerspective}
\begin{equation}
Z^{(i)} = U_{0\to i}\, Z^{(0)} \, U_{i\to 0},
\label{eq:heisenberg-map-app}
\end{equation}
which preserves spectra and expectation values across frames.

Let $A_S$ be an operator acting on system $S$ alone, and let
\begin{equation}
\tilde A_S := \openone_{0}\otimes \openone_{i} \otimes
\Biggl(
\bigotimes_{k\in R\setminus\{S\}} \openone_k
\Biggr)
\otimes A_S
\end{equation}
be its embedding into the full Hilbert space. Because $\tilde A_S$ acts trivially on systems $0$ and $i$, it commutes with the swap $\mathrm{SWAP}_{0,i}$:
\begin{equation}
[\tilde A_S,\,\mathrm{SWAP}_{0,i}] = 0.
\end{equation}
Using the factorisation $U_{0\to i}=\mathrm{SWAP}_{0,i} W_i$ from Proposition~\ref{prop:finiteG-U}, we obtain
\begin{align}
A_S^{(i)}
&:= U_{0\to i}\, \tilde A_S\, U_{i\to 0}
= \mathrm{SWAP}_{0,i}\, W_i \tilde A_S W_i^\dagger \,\mathrm{SWAP}_{0,i}.
\label{eq:AS-i-def}
\end{align}
The nontrivial part of the calculation is the conjugation by $W_i$. Using the definition
\begin{equation}
W_i = \sum_{g\in G} \Pi_g^{(i)} \otimes \openone_0 \otimes \bigotimes_{k\in R} U_R(g)_k,
\qquad
\Pi_g^{(i)}:=\ket{g}\!\bra{g}_{\,i},
\end{equation}
we compute
\begin{align}
W_i \tilde A_S W_i^\dagger
&=
\sum_{g,h\in G}
\Pi_g^{(i)} \Pi_h^{(i)}
\otimes \openone_0
\otimes
\Biggl(
\bigotimes_{k\in R\setminus\{S\}} U_R(g)_k \openone_k U_R(h)_k^\dagger
\Biggr)
\otimes U_R(g)_S A_S U_R(h)_S^\dagger
\nonumber\\
&=
\sum_{g\in G}
\Pi_g^{(i)} \otimes \openone_0
\otimes
\Biggl(
\bigotimes_{k\in R\setminus\{S\}} \openone_k
\Biggr)
\otimes U_R(g)_S A_S U_R(g)_S^\dagger,
\label{eq:Wi-AS-Wi}
\end{align}
where in the second line we used $\Pi_g^{(i)}\Pi_h^{(i)}=\delta_{g,h}\Pi_g^{(i)}$ and unitarity of $U_R(g)$ on the spectator systems.

Finally, we conjugate by $\mathrm{SWAP}_{0,i}$. Since $S \neq 0, i$, the swap operation only exchanges the identity on system $0$ with the projector on system $i$:
\begin{equation}
\mathrm{SWAP}_{0,i} \left( \Pi_g^{(i)} \otimes \openone_0 \right) \mathrm{SWAP}_{0,i} = \openone_i \otimes \Pi_g^{(0)}.
\end{equation}
We therefore obtain the following general formula.

\begin{proposition}[Local operator transform]
\label{prop:local-operator}
Let $A_S$ be an operator acting on system $S\in R$, and let $U_{0\to i}$ be the finite-group QRF unitary \eqref{eq:U-finiteG-def}. Then the transformed operator in frame $i$ is
\begin{equation}
A_S^{(i)}
=
\sum_{g\in G}
\Pi_g^{(0)} \otimes \openone_{i} \otimes
\Biggl(
\bigotimes_{k\in R\setminus\{S\}} \openone_k
\Biggr)
\otimes
\Bigl[ U_R(g)_S A_S U_R(g)_S^\dagger \Bigr],
\qquad
\Pi_g^{(0)}:=\ket{g}\!\bra{g}_{\,0}.
\label{eq:AS-i-general}
\end{equation}
In words: in the new frame $i$, the old reference frame $0$ acts as a control register, and the operation on system $S$ is conjugated by the right-regular representation element $U_R(g)_S$ conditioned on the classical value $g$ stored in system $0$.
\end{proposition}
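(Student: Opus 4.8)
The plan is to prove \eqref{eq:AS-i-general} by direct conjugation, exploiting the factorisation $U_{0\to i}=\mathrm{SWAP}_{0,i}\,W_i$ established in Proposition~\ref{prop:finiteG-U}. First I would embed $A_S$ into the full Hilbert space as $\tilde A_S$, acting as the identity on both frame registers $0$ and $i$, and record the elementary but load-bearing fact that $A_S$ is supported away from $\{0,i\}$ (since $S\in R$). Writing $U_{i\to 0}=U_{0\to i}^\dagger=W_i^\dagger\,\mathrm{SWAP}_{0,i}$, using that the swap is Hermitian and involutive, the Heisenberg map \eqref{eq:heisenberg-map-app} collapses to
\begin{equation}
A_S^{(i)}=\mathrm{SWAP}_{0,i}\,\bigl(W_i\,\tilde A_S\,W_i^\dagger\bigr)\,\mathrm{SWAP}_{0,i},
\end{equation}
so the entire content of the statement is carried by the conjugation of $\tilde A_S$ by $W_i$, followed by a single swap.

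The core step is therefore to evaluate $W_i\,\tilde A_S\,W_i^\dagger$. Inserting the spectral form $W_i=\sum_{g}\Pi_g^{(i)}\otimes\openone_0\otimes\bigotimes_{k\in R}U_R(g)_k$ produces a double sum over $g,h\in G$, which I would immediately reduce using the orthogonality $\Pi_g^{(i)}\Pi_h^{(i)}=\delta_{g,h}\Pi_g^{(i)}$ to force $h=g$ and eliminate the second index. On every spectator register $k\in R\setminus\{S\}$ the conjugation acts as $U_R(g)_k\,\openone_k\,U_R(g)_k^\dagger=\openone_k$ by unitarity, so only the factor on $S$ survives nontrivially, producing the automorphism $U_R(g)_S\,A_S\,U_R(g)_S^\dagger$. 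This yields
\begin{equation}
W_i\,\tilde A_S\,W_i^\dagger=\sum_{g\in G}\Pi_g^{(i)}\otimes\openone_0\otimes\Bigl(\bigotimes_{k\in R\setminus\{S\}}\openone_k\Bigr)\otimes U_R(g)_S\,A_S\,U_R(g)_S^\dagger .
\end{equation}

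The final step is to conjugate by $\mathrm{SWAP}_{0,i}$. Because $S$ and all spectators lie outside $\{0,i\}$, the swap leaves the system-$S$ factor and the spectator identities untouched and acts only on the two-register block, sending $\Pi_g^{(i)}\otimes\openone_0\mapsto\openone_i\otimes\Pi_g^{(0)}$; substituting this returns exactly \eqref{eq:AS-i-general}. I expect no genuine obstacle: the argument is a bookkeeping computation whose only delicate points are ensuring $A_S$ is supported away from both frame registers, so that the final swap moves the control label cleanly from $i$ to $0$, and collapsing the double group sum via projector orthogonality. The conceptual content—that conjugation by $W_i$ dresses $A_S$ with the $g$-dependent automorphism $\alpha_g(\cdot)=U_R(g)(\cdot)U_R(g)^\dagger$ while the swap transfers the control role from the new frame to the old frame—is precisely the mechanism already identified below Theorem~\ref{thm:finiteG-gate-transform-main}.
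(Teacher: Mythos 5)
Your proposal is correct and follows essentially the same route as the paper's own derivation: the factorisation $U_{0\to i}=\mathrm{SWAP}_{0,i}\,W_i$, collapsing the double group sum via $\Pi_g^{(i)}\Pi_h^{(i)}=\delta_{g,h}\Pi_g^{(i)}$ together with unitarity on the spectator registers, and the final swap conjugation sending $\Pi_g^{(i)}\otimes\openone_0\mapsto\openone_i\otimes\Pi_g^{(0)}$. No gaps.
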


\subsection{Gate-transform theorem for finite groups}
\label{app:finiteG-theorem}

We now state and prove the central theorem used in the main text.

\begin{theorem}[Gate-transform theorem for finite groups]
\label{thm:gate-transform-finiteG}
Let $G$ be a finite group, and let every system carry the right-regular representation $U_R$ of $G$. Consider a change of quantum reference frame $0\!\to\!i$ implemented by $U_{0\to i}$ in \eqref{eq:U-finiteG-def}. Let $U_S$ be a unitary acting on a single system $S\in R$, and let $\tilde U_S$ be its embedding into the full Hilbert space. Then the image of $U_S$ in frame $i$ is the controlled unitary
\begin{equation}
U_S^{(i)}
=
U_{0\to i}\,\tilde U_S\,U_{i\to 0}
=
\sum_{g\in G}
\Pi_g^{(0)} \otimes \openone_{i} \otimes
\Biggl(
\bigotimes_{k\in R\setminus\{S\}} \openone_k
\Biggr)
\otimes
\Bigl[ U_R(g)_S U_S U_R(g)_S^\dagger \Bigr].
\label{eq:US-i-general}
\end{equation}

In particular, the old reference frame $0$ plays the role of a control register, with control basis $\{\ket{g}_0\}_{g\in G}$, and the target operation on $S$ is given by the conjugacy action of the group element $g$ on $U_S$.
\end{theorem}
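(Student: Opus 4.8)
The plan is to obtain this theorem as the specialization of Proposition~\ref{prop:local-operator} to a unitary target operator, together with the extra observation that the resulting operator is genuinely a \emph{controlled} unitary rather than a generic block operator. First I would invoke the factorization $U_{0\to i}=\mathrm{SWAP}_{0,i}\,W_i$ from Proposition~\ref{prop:finiteG-U}, and note that since $S\in R$ the embedded gate $\tilde U_S$ acts trivially on both frame registers $0$ and $i$, so $[\tilde U_S,\mathrm{SWAP}_{0,i}]=0$. This lets me write $U_S^{(i)}=\mathrm{SWAP}_{0,i}\,\bigl(W_i\,\tilde U_S\,W_i^\dagger\bigr)\,\mathrm{SWAP}_{0,i}$, isolating the only nontrivial piece, the conjugation by $W_i$.

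The second step is purely computational and mirrors the derivation already carried out for a general operator. Expanding $W_i=\sum_{g}\Pi_g^{(i)}\otimes\openone_0\otimes\bigotimes_{k\in R}U_R(g)_k$ and using orthogonality $\Pi_g^{(i)}\Pi_h^{(i)}=\delta_{g,h}\Pi_g^{(i)}$ collapses the double sum to a single sum over $g$; on the spectator registers $k\in R\setminus\{S\}$ the factors $U_R(g)_k\openone_k U_R(g)_k^\dagger=\openone_k$ cancel by unitarity, leaving only the conjugation $U_R(g)_S U_S U_R(g)_S^\dagger$ on $S$. This is exactly Eq.~\eqref{eq:Wi-AS-Wi} with $A_S=U_S$, so I can quote it directly. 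Then conjugation by the swap transfers the control role: because $S\neq 0,i$, the identity $\mathrm{SWAP}_{0,i}\bigl(\Pi_g^{(i)}\otimes\openone_0\bigr)\mathrm{SWAP}_{0,i}=\openone_i\otimes\Pi_g^{(0)}$ relabels the control register from system $i$ to system $0$, yielding the claimed formula~\eqref{eq:US-i-general}.

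The only ingredient beyond Proposition~\ref{prop:local-operator} is verifying that $U_S^{(i)}$ is a controlled unitary: each block $U_R(g)_S U_S U_R(g)_S^\dagger$ is unitary (a unitary conjugated by a unitary), and the family $\{\Pi_g^{(0)}\}_{g\in G}$ is a complete set of orthogonal projectors with $\sum_g\Pi_g^{(0)}=\openone_0$, so the operator is block-diagonal in the basis $\{\ket{g}_0\}$ with unitary blocks—hence unitary and of controlled form with control register $0$. I do not expect a real obstacle here, since the algebra is entirely contained in the preceding proposition; the only point requiring mild care is the tensor-factor bookkeeping, namely ensuring the swap touches only systems $0$ and $i$ and that the spectator identities cancel cleanly, both of which follow immediately from the explicit projector form of $W_i$.
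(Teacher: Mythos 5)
Your proposal is correct and follows essentially the same route as the paper: the paper's proof likewise sets $A_S = U_S$ in Proposition~\ref{prop:local-operator} (whose derivation is exactly the $\mathrm{SWAP}_{0,i}\,W_i$ factorization and projector-orthogonality computation you describe) and then observes that the blocks are unitary and the projectors $\{\Pi_g^{(0)}\}$ are orthogonal and complete, so the result is a controlled unitary. No gaps.
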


\begin{proof}
Set $A_S=U_S$ in Proposition~\ref{prop:local-operator}. The embedding of $U_S$ into the full Hilbert space is exactly $\tilde U_S$, and the relation \eqref{eq:heisenberg-map-app} with $Z^{(0)}=\tilde U_S$ gives
\begin{equation}
U_S^{(i)}
=
U_{0\to i}\,\tilde U_S\,U_{i\to 0}
=
\sum_{g\in G}
\Pi_g^{(0)} \otimes
\Biggl(
\bigotimes_{k\in R\setminus\{S\}} \openone_k
\Biggr)
\otimes
\Bigl[ U_R(g)_S U_S U_R(g)_S^\dagger \Bigr],
\end{equation}
which is \eqref{eq:US-i-general}. Each block on the right-hand side is unitary, and the projectors $\{\Pi_g^{(0)}\}$ are orthogonal and complete on system $0$; hence $U_S^{(i)}$ is unitary and of controlled form.
\end{proof}

Equation~\eqref{eq:US-i-general} shows that the gate-transform law is completely determined by the group action on the target system. No additional assumptions beyond the perspective-neutral construction are required: the structure of the QRF map \eqref{eq:U-finiteG-def} fixes the controlled-unitary form uniquely.

\subsection{Classification for finite Abelian groups}
\label{app:finiteG-abelian}

Theorem~\ref{thm:gate-transform-finiteG} holds for arbitrary finite groups $G$. When $G$ is {Abelian}, further structure emerges because the right-regular representation $U_R$ may be simultaneously diagonalised in a Fourier-transformed basis, and conjugation by $U_R(g)$ defines a commuting family of automorphisms on the operator algebra of $\mathcal{H}_S$.

Define the group action on operators by
\begin{equation}
\alpha_g(U) := U_R(g)_S\,U\,U_R(g)_S^\dagger, \qquad g\in G.
\label{eq:alpha-action}
\end{equation}
Because $U_R$ is a representation, we have $\alpha_g\circ\alpha_h = \alpha_{gh}$ and $\alpha_e=\mathrm{id}$. For finite Abelian $G$, the maps $\{\alpha_g\}_{g\in G}$ commute. Embedding $\alpha_g(U_S)$ on the full Hilbert space by identities on all systems other than $S$, the gate-transform theorem \eqref{eq:US-i-general} can then be written as
\begin{equation}
U_S^{(i)} =
\sum_{g\in G}
\Pi_g^{(0)} \otimes
\alpha_g(U_S)_S
\otimes
\openone_{\overline{\{0,S\}}},
\label{eq:US-i-alpha}
\end{equation}
which is a controlled unitary whose `targets' are the images of $U_S$ under the action of the Abelian group.

The following simple corollaries classify the qualitative behaviour.

\begin{corollary}[Symmetry-commuting gates remain local]
\label{cor:commuting-local}
If $U_S$ commutes with the group action, i.e.
\begin{equation}
\alpha_g(U_S)=U_S
\quad\text{for all } g\in G,
\end{equation}
then
\begin{equation}
U_S^{(i)} =
\Biggl(\sum_{g\in G} \Pi_g^{(0)}\Biggr) \otimes U_S \otimes \openone_{\overline{\{0,S\}}}
= \openone_0 \otimes U_S \otimes \openone_{\overline{\{0,S\}}},
\end{equation}
so the gate remains local and completely decouples from the reference-frame register.
\end{corollary}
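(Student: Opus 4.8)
The plan is to specialize the finite-group gate-transform theorem (Theorem~\ref{thm:gate-transform-finiteG}), in its Abelian repackaging Eq.~\eqref{eq:US-i-alpha}, to the hypothesis $\alpha_g(U_S)=U_S$, so that the controlled structure collapses onto a trivial tensor factor. No machinery beyond the resolution of identity on the control register is required. First I would recall that, by Eq.~\eqref{eq:US-i-alpha}, the image of $U_S$ under the frame change $0\to i$ is
\begin{equation}
U_S^{(i)} = \sum_{g\in G} \Pi_g^{(0)} \otimes \alpha_g(U_S)_S \otimes \openone_{\overline{\{0,S\}}}.
\end{equation}
I would then note that the stated hypothesis is, via the definition $\alpha_g(U)=U_R(g)_S\,U\,U_R(g)_S^\dagger$ in Eq.~\eqref{eq:alpha-action}, equivalent to $[U_S,U_R(g)]=0$ for every $g\in G$; that is, $U_S$ lies in the commutant of the right-regular representation.

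Next I would substitute the hypothesis into the sum. Since each target block is now the fixed, $g$-independent operator $U_S$, the factor $U_S \otimes \openone_{\overline{\{0,S\}}}$ can be pulled outside the sum over $g$, leaving
\begin{equation}
U_S^{(i)} = \Biggl(\sum_{g\in G} \Pi_g^{(0)}\Biggr) \otimes U_S \otimes \openone_{\overline{\{0,S\}}}.
\end{equation}
Finally I would invoke the completeness of the group basis $\{\ket{g}_0\}_{g\in G}$, namely the resolution of identity $\sum_{g\in G}\Pi_g^{(0)} = \sum_{g\in G}\ket{g}\!\bra{g}_0 = \openone_0$, to identify the control factor with $\openone_0$ and conclude $U_S^{(i)} = \openone_0 \otimes U_S \otimes \openone_{\overline{\{0,S\}}}$.

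Because every step is an immediate algebraic consequence of results already in hand, I do not expect any substantive obstacle: the entire content of the corollary resides in the specialization itself. The only point worth flagging explicitly is the equivalence between the automorphism-invariance condition $\alpha_g(U_S)=U_S$ and the commutator condition $[U_S,U_R(g)]=0$, which makes transparent that the gates surviving a frame change unchanged are precisely those in the commutant of the representation, matching the group-theoretic criterion emphasized in the main text.
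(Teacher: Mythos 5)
Your proposal is correct and follows exactly the route the paper intends: the corollary is stated as an immediate specialization of Eq.~\eqref{eq:US-i-alpha}, with the $g$-independent block pulled out of the sum and the resolution of identity $\sum_{g}\Pi_g^{(0)}=\openone_0$ collapsing the control factor (the paper even writes this intermediate step inside the corollary statement and offers no separate proof). Your added remark that $\alpha_g(U_S)=U_S$ is equivalent to $[U_S,U_R(g)]=0$ is a harmless and accurate clarification consistent with Corollary~\ref{cor:abelian-classification}(i).
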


\begin{corollary}[One-dimensional covariance sector]
\label{cor:covariant-sector}
Suppose there exists a character $\chi: G\to \mathrm{U}(1)$ such that
\begin{equation}
\alpha_g(U_S) = \chi(g)\,U_S
\quad\text{for all } g\in G.
\end{equation}
Then
\begin{equation}
U_S^{(i)}
=
\Biggl(\sum_{g\in G} \chi(g)\,\Pi_g^{(0)}\Biggr) \otimes U_S \otimes \openone_{\overline{\{0,S\}}}
\equiv
V_0(\chi)\otimes U_S \otimes \openone_{\overline{\{0,S\}}},
\end{equation}
where $V_0(\chi)$ is a diagonal unitary on the control register $0$ with eigenvalues $\chi(g)$. In this case the frame change dresses the local gate by a frame-dependent phase on system $0$, but does not create entanglement between $0$ and $S$.
\end{corollary}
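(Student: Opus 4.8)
The plan is to read off the result by substituting the covariance hypothesis directly into the Abelian gate-transform formula Eq.~\eqref{eq:US-i-alpha}, which already expresses $U_S^{(i)}$ as a controlled unitary whose target blocks are the orbit operators $\alpha_g(U_S)$. The content of the corollary is that, when every orbit operator is a scalar multiple of $U_S$, the controlled sum collapses to a single tensor product, so I expect the proof to be essentially a one-line manipulation followed by two short verifications. This parallels the argument already used for the commuting case in Corollary~\ref{cor:commuting-local}, with the trivial character replaced by a general $\chi$.

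First I would start from
\begin{equation}
U_S^{(i)} =
\sum_{g\in G}
\Pi_g^{(0)} \otimes
\alpha_g(U_S)_S
\otimes
\openone_{\overline{\{0,S\}}},
\end{equation}
and insert the eigenoperator relation $\alpha_g(U_S)=\chi(g)\,U_S$. Since $\chi(g)\in\mathrm{U}(1)$ is a scalar it factors out of the tensor product on $S$, turning each summand into $\chi(g)\,\Pi_g^{(0)}\otimes U_S\otimes\openone_{\overline{\{0,S\}}}$; collecting the $g$-independent factors $U_S$ and $\openone_{\overline{\{0,S\}}}$ outside the sum then gives
\begin{equation}
U_S^{(i)} =
\Biggl(\sum_{g\in G}\chi(g)\,\Pi_g^{(0)}\Biggr)\otimes U_S\otimes\openone_{\overline{\{0,S\}}}
= V_0(\chi)\otimes U_S\otimes\openone_{\overline{\{0,S\}}},
\end{equation}
which is the asserted form.

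It then remains to check the two side claims. Unitarity of $V_0(\chi)$ is immediate: the projectors $\{\Pi_g^{(0)}\}$ are orthogonal and complete on system $0$, so $V_0(\chi)$ is diagonal in the group basis with eigenvalues $\chi(g)$, and $V_0(\chi)^\dagger V_0(\chi)=\sum_{g}|\chi(g)|^2\,\Pi_g^{(0)}=\openone_0$ since $|\chi(g)|=1$. The absence of entanglement follows from the product structure: $U_S^{(i)}$ is a tensor product of a unitary on the control register $0$ and a unitary on $S$ (times identity on the spectators), and such an operator sends any state that is a product across the $0{:}S$ cut to another product state, hence generates no $0{:}S$ entanglement.

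I do not expect a genuine technical obstacle, since the statement is a direct specialization of Eq.~\eqref{eq:US-i-alpha}. The only point requiring a line of care is the scalar factorization: one must use that the $\chi(g)$ are genuine numbers rather than operators, so that they can legitimately be assembled into the single diagonal operator $V_0(\chi)$. Their multiplicativity $\chi(gh)=\chi(g)\chi(h)$—which makes $V_0(\chi)$ a one-dimensional representation of $G$ carried by the control register—is consistent with the composition law $\alpha_g\circ\alpha_h=\alpha_{gh}$ recorded below Eq.~\eqref{eq:alpha-action}, so the hypothesis that $\chi$ is a character is the self-consistent one and closes the argument.
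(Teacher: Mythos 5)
Your proof is correct and follows exactly the route the paper takes (implicitly, since the paper states this corollary without a separate proof environment): substitute the eigenoperator relation $\alpha_g(U_S)=\chi(g)\,U_S$ into the controlled form of Eq.~\eqref{eq:US-i-alpha}, pull the scalar characters into the control register to form $V_0(\chi)$, and note that the resulting product operator is unitary and non-entangling. Nothing is missing.
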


\begin{corollary}[Generic gates become entangling]
\label{cor:generic-entangling}
If the orbit of $U_S$ under the Abelian action $\{\alpha_g\}_{g\in G}$ contains at least two nonproportional unitaries (i.e.\ there exist $g,h\in G$ such that $\alpha_g(U_S)$ and $\alpha_h(U_S)$ are not related by a global phase), then the controlled unitary $U_S^{(i)}$ in \eqref{eq:US-i-alpha} is a genuinely entangling gate between system $0$ and system $S$.
\end{corollary}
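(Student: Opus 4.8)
The plan is to establish the operational criterion that the controlled unitary of the block form in Eq.~\eqref{eq:US-i-alpha} is \emph{genuinely entangling across the $0{:}S$ cut} precisely when its target branches fail to be mutually proportional, and to do so by exhibiting an explicit product input whose image is entangled. Dropping the spectator identity $\openone_{\overline{\{0,S\}}}$, which plays no role, I write the relevant operator as $W=\sum_{g\in G}\Pi_g^{(0)}\otimes\alpha_g(U_S)_S$ with $\Pi_g^{(0)}=\ket{g}\!\bra{g}_0$. The first step is to record the action of $W$ on a control superposition tensored with a fixed system state: for any $\ket{\psi}_S$ and any $g,h\in G$,
\begin{equation}
W\,\Bigl(\tfrac{1}{\sqrt2}(\ket{g}_0+\ket{h}_0)\otimes\ket{\psi}_S\Bigr)
=\tfrac{1}{\sqrt2}\bigl(\ket{g}_0\otimes \alpha_g(U_S)\ket{\psi}+\ket{h}_0\otimes \alpha_h(U_S)\ket{\psi}\bigr),
\end{equation}
where I used $\Pi_{g'}^{(0)}\ket{g}_0=\delta_{g',g}\ket{g}_0$. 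Since $\ket{g}_0$ and $\ket{h}_0$ are orthogonal, this output is a Schmidt-rank-two (hence entangled) state iff the target vectors $\alpha_g(U_S)\ket{\psi}$ and $\alpha_h(U_S)\ket{\psi}$ are linearly independent, and is a product state otherwise.

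The second and central step is to convert the hypothesis---existence of $g,h$ with $\alpha_g(U_S)$ and $\alpha_h(U_S)$ not related by a global phase---into the existence of a witness $\ket{\psi}$ making the two branches independent. I set $T:=\alpha_h(U_S)^\dagger\,\alpha_g(U_S)$, which is unitary. Nonproportionality of $\alpha_g(U_S)$ and $\alpha_h(U_S)$ is equivalent to $T\neq\lambda\openone$ for every $\lambda\in\mathrm{U}(1)$, and a unitary that is not a scalar multiple of the identity is diagonalizable with at least two distinct eigenvalues. Choosing unit eigenvectors $\ket{a},\ket{b}$ of $T$ with distinct eigenvalues and setting $\ket{\psi}=\tfrac1{\sqrt2}(\ket{a}+\ket{b})$ forces $T\ket{\psi}$ to be non-parallel to $\ket{\psi}$. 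Because $\alpha_g(U_S)\ket{\psi}=\alpha_h(U_S)\,T\ket{\psi}$ and $\alpha_h(U_S)$ is invertible, parallelism of $\alpha_g(U_S)\ket{\psi}$ and $\alpha_h(U_S)\ket{\psi}$ would force $T\ket{\psi}\parallel\ket{\psi}$, a contradiction. Hence for this $\ket{\psi}$ the two branches are independent, the displayed output is entangled, and $W$ maps a product state to an entangled one, establishing the claim.

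As a cross-check I would also phrase this via the operator Schmidt decomposition across the $0{:}S$ cut: the $\{\Pi_g^{(0)}\}$ are orthonormal in the Hilbert--Schmidt inner product ($\Tr(\Pi_g\Pi_{g'})=\delta_{gg'}$), so the operator Schmidt rank of $W$ equals $\dim\mathrm{span}_{\mathbb C}\{\alpha_g(U_S):g\in G\}$; this is $1$ exactly in the proportional (Corollary~\ref{cor:covariant-sector}) case, where $W=V_0\otimes U_S$ is a genuine product unitary, and is $\ge 2$ under the present hypothesis. The main point requiring care is the meaning of ``genuinely entangling'': I therefore prove the strong operational version (a product input mapped to an entangled output) rather than merely ``not a product unitary,'' which sidesteps the fact that a full-Schmidt-rank unitary such as a $\mathrm{SWAP}$ could in principle still permute product states; the explicit witness above leaves no such loophole. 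I expect the only delicate point to be the eigenvalue argument for $T$, namely guaranteeing a non-eigenvector of a non-scalar unitary, which is elementary once $T\neq\lambda\openone$ is extracted cleanly from the nonproportionality hypothesis.
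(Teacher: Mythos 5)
Your proof is correct, and it takes a genuinely different route from the paper's. The paper argues by contradiction at the operator level: it assumes $U_S^{(i)}$ factorizes as a product of local unitaries on $0$ and $S$, reads off in the control basis that all blocks $\alpha_g(U_S)$ would then have to be proportional to a common unitary, contradicts the orbit hypothesis, and concludes that the gate ``therefore generates entanglement \ldots for some product input state.'' That last inference is exactly the point you flag: not being a product of local unitaries does not by itself guarantee that some product state is mapped to an entangled one (the SWAP gate being the standard counterexample), so the paper's proof leaves a small logical gap between ``not locally decomposable'' and ``genuinely entangling.'' Your argument closes that gap constructively: you exhibit an explicit witness $\tfrac{1}{\sqrt2}(\ket{g}_0+\ket{h}_0)\otimes\ket{\psi}_S$ and reduce entanglement of its image to linear independence of the two branch vectors, which you then secure by diagonalizing the non-scalar unitary $T=\alpha_h(U_S)^\dagger\alpha_g(U_S)$ and superposing two eigenvectors with distinct eigenvalues. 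What the paper's approach buys is brevity and a clean structural statement (the operator Schmidt rank across the $0{:}S$ cut exceeds one, matching your cross-check); what yours buys is the full operational conclusion with no appeal to an unproved dichotomy. The only cosmetic caveat is that your Hilbert--Schmidt normalization $\Tr(\Pi_g\Pi_{g'})=\delta_{gg'}$ holds because the $\Pi_g$ are rank-one projectors, which is worth saying explicitly, but this does not affect the main argument.
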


\begin{proof}
Write $U_S^{(i)}$ in the control basis on system $0$ as in \eqref{eq:US-i-alpha}. Suppose, for the sake of contradiction, that $U_S^{(i)}$ were equivalent to a product of local unitaries on $0$ and $S$, i.e.\ that there exist unitaries $V_0,W_0$ on system $0$ and $V_S,W_S$ on system $S$ such that
\begin{equation}
(V_0\otimes V_S)\,U_S^{(i)}\,(W_0\otimes W_S)
= \openone_0 \otimes U_* \otimes \openone_{\overline{\{0,S\}}}
\end{equation}
for some unitary $U_*$ on $S$. In the control basis this would imply that all blocks on $S$ are equal up to a common conjugation, i.e.\ that
\begin{equation}
\alpha_g(U_S) = e^{\ii\theta_g}\,U_*
\quad\text{for all } g\in G,
\end{equation}
so the entire orbit $\{\alpha_g(U_S): g\in G\}$ consists of unitaries proportional to $U_*$. This contradicts the assumption that the orbit contains two nonproportional unitaries. Hence $U_S^{(i)}$ cannot be written as a product of local unitaries on $0$ and $S$, and therefore generates entanglement between $0$ and $S$ for some product input state.
\end{proof}

The Abelian case thus admits a simple qualitative classification into (i) symmetry-commuting gates that remain local, (ii) covariant gates that acquire only frame-dependent phases on the control, and (iii) generic gates whose orbit under the group action makes the transformed operation genuinely entangling between the old frame and the target system. This is the group-theoretic underpinning of the `remains local vs becomes controlled vs entangling' trichotomy used in the main text.

\subsection{Specialisation to $G=\mathbb{Z}_2$}
\label{app:finiteG-Z2}

For the $\mathbb{Z}_2$ model used in the main text, we have $G=\{0,1\}$ with addition modulo $2$ and
\begin{equation}
U_R(0)=\openone,\qquad U_R(1)=X,
\end{equation}
where $X$ is the Pauli-$X$ operator on the relevant qubit. The projectors on the control system are $\Pi^{(0)}_0=\ket{0}\!\bra{0}_0$ and $\Pi^{(0)}_1=\ket{1}\!\bra{1}_0$. The gate-transform theorem \eqref{eq:US-i-general} then reduces to
\begin{equation}
U_S^{(i)}
=
\ket{0}\!\bra{0}_0 \otimes U_S \otimes \openone_{\overline{\{0,S\}}}
+
\ket{1}\!\bra{1}_0 \otimes (X_S U_S X_S) \otimes \openone_{\overline{\{0,S\}}},
\label{eq:US-i-Z2}
\end{equation}
where $\openone_{\overline{\{0,S\}}}$ denotes the identity on all systems other than $0$ and $S$. This is exactly the universal single-qubit transform law used in Eq.~\eqref{eq:Z2-gate-transform} of the main text. In particular:
\begin{itemize}
\item If $U_S$ commutes with $X_S$, then $X_S U_S X_S = U_S$, and the two blocks in \eqref{eq:US-i-Z2} coincide: the gate remains local and decouples from the reference frame.
\item If $U_S$ anticommutes with $X_S$, then $X_S U_S X_S = -U_S$, and the frame change produces a controlled phase between the $\ket{0}_0$ and $\ket{1}_0$ sectors.
\item For a generic gate such as the Hadamard $H$, for which neither $[H,X]=0$ nor $\{H,X\}=0$, the two blocks $H$ and $XHX$ are not proportional, and the transformed gate is genuinely entangling between system $0$ and system $S$.
\end{itemize}

These statements underpin the concrete $\mathbb{Z}_2$ circuit examples analysed in Secs.~\ref{sec:finiteG-gate-calculus}and ~\ref{sec:circuits}, where we show how Hadamard and CNOT gates transform under changes of quantum reference frame and how this induces a redistribution of coherence and entanglement across different internal perspectives.

\section{Coherence--concurrence complementarity for pure two-qubit states}
\label{app:coh_conc}

In this appendix, we give a self-contained derivation of the complementarity relation
\begin{equation}
  C^2 + D^2 = 1
\end{equation}
for pure two-qubit states, using exactly the coherence and entanglement measures employed in the main text (following Refs.~\cite{sun_intrinsic_relations_2017, fan_universal_complementarity_2018, zhou_mutual_restriction_2020}
.

\subsection{Definitions and basic properties}

Let $\mathcal{H}_X \cong \mathbb{C}^2$ and $\mathcal{H}_Y \cong \mathbb{C}^2$ be qubit Hilbert spaces, and let
\begin{equation}
  \rho_{XY} = \ket{\Psi}\!\bra{\Psi}
\end{equation}
be a pure state on $\mathcal{H}_X \otimes \mathcal{H}_Y$. We denote the reduced states by
\begin{equation}
  \rho_X = \mathrm{Tr}_Y(\rho_{XY}), \qquad
  \rho_Y = \mathrm{Tr}_X(\rho_{XY}).
\end{equation}
For such pure bipartite states, the nonzero eigenvalues of $\rho_X$ and $\rho_Y$ coincide. In particular, there exist $p\in[0,1]$ and an orthonormal basis $\{\ket{0}_X,\ket{1}_X\}$ of $\mathcal{H}_X$ and $\{\ket{0}_Y,\ket{1}_Y\}$ of $\mathcal{H}_Y$ such that the Schmidt decomposition reads
\begin{equation}
  \ket{\Psi}
  \;=\;
  \sqrt{p}\,\ket{0}_X\ket{0}_Y + \sqrt{1-p}\,\ket{1}_X\ket{1}_Y,
  \label{eq:schmidt}
\end{equation}
and the reduced density matrices are diagonal in this basis:
\begin{equation}
  \rho_X = p\,\ket{0}\!\bra{0}_X + (1-p)\,\ket{1}\!\bra{1}_X,
  \qquad
  \rho_Y = p\,\ket{0}\!\bra{0}_Y + (1-p)\,\ket{1}\!\bra{1}_Y.
\end{equation}
In particular, the purities are given by
\begin{equation}
  \mathrm{Tr}(\rho_X^2) = \mathrm{Tr}(\rho_Y^2)
  = p^2 + (1-p)^2.
  \label{eq:purities}
\end{equation}

We now recall the resource measures used in the main text:  For a pure two-qubit state, the (Wootters) concurrence is
\begin{equation}
  C = \sqrt{2\bigl(1 - \mathrm{Tr}(\rho_X^2)\bigr)}
  = \sqrt{2\bigl(1 - \mathrm{Tr}(\rho_Y^2)\bigr)}.
  \label{eq:C-def}
\end{equation}
This is equivalent to the usual expression $C=2\sqrt{\det\rho_X}$. For each subsystem $Z\in\{X,Y\}$, we take
\begin{equation}
  D_Z = \sqrt{2\,\mathrm{Tr}(\rho_Z^2) - 1},
  \label{eq:DZ-def}
\end{equation}
and define the total first-order coherence as
\begin{equation}
  D = \sqrt{\frac{D_X^2 + D_Y^2}{2}}.
  \label{eq:D-total-def}
\end{equation}

These definitions coincide with those used in the coherence--concurrence complementarity literature for qubits~\cite{sun_intrinsic_relations_2017, fan_universal_complementarity_2018, zhou_mutual_restriction_2020}.

\subsection{Derivation in the Schmidt basis}

Starting from the Schmidt form in Eq.~\eqref{eq:schmidt}, the reduced states have eigenvalues \(p\) and \(1-p\), so
\begin{align}
  \mathrm{Tr}(\rho_X^2)
  &= \mathrm{Tr}(\rho_Y^2)
   = p^2 + (1-p)^2 \nonumber\\
  &= 1 - 2p(1-p).
  \label{eq:purity-explicit}
\end{align}
Substituting Eq.~\eqref{eq:purity-explicit} into Eq.~\eqref{eq:C-def} gives
\begin{align}
  C^2
  &= 2\bigl(1 - \mathrm{Tr}(\rho_X^2)\bigr)
   = 2\bigl(1 - (1 - 2p(1-p))\bigr) \nonumber\\
  &= 4p(1-p),
  \label{eq:C2-explicit}
\end{align}
so that \(C = 2\sqrt{p(1-p)}\).

From Eqs.~\eqref{eq:DZ-def} and \eqref{eq:purity-explicit},
\begin{align}
  D_X^2
  &= 2\,\mathrm{Tr}(\rho_X^2) - 1
   = 2\bigl(1 - 2p(1-p)\bigr) - 1 \nonumber\\
  &= 1 - 4p(1-p),
\end{align}
and by symmetry \(D_Y^2 = D_X^2\). Hence
\begin{equation}
  D_X^2 = D_Y^2 = 1 - 4p(1-p).
  \label{eq:DX2-DY2-explicit}
\end{equation}
Using Eq.~\eqref{eq:DX2-DY2-explicit} in Eq.~\eqref{eq:D-total-def} finally yields
\begin{align}
  D^2
  &= \frac{D_X^2 + D_Y^2}{2}
   = \frac{(1 - 4p(1-p)) + (1 - 4p(1-p))}{2} \nonumber\\
  &= 1 - 4p(1-p).
  \label{eq:D2-explicit}
\end{align}

\subsection{Coherence--concurrence complementarity}

Combining Eqs.~\eqref{eq:C2-explicit} and \eqref{eq:D2-explicit} immediately gives
\begin{equation}
  C^2 + D^2
  = 4p(1-p) + \bigl(1 - 4p(1-p)\bigr)
  = 1,
\end{equation}
for every $p \in [0,1]$. Since every pure two-qubit state admits a Schmidt decomposition of the form in Eq.~\eqref{eq:schmidt}, this yields:

\begin{proposition}
\label{prop:C2plusD2}
For any pure state $\rho_{XY} = \ket{\Psi}\!\bra{\Psi}$ of two qubits $X$ and $Y$, the concurrence $C$ and total first-order coherence $D$ defined by Eqs.~\eqref{eq:C-def}--\eqref{eq:D-total-def} satisfy
\begin{equation}
  C^2 + D^2 = 1.
\end{equation}
\end{proposition}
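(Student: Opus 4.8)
The plan is to reduce the entire statement to a single Schmidt parameter and exploit the fact that the two reduced states of a pure bipartite state share the same spectrum. First I would invoke the Schmidt decomposition recalled in Eq.~\eqref{eq:schmidt}: since $\rho_{XY}=\ket{\Psi}\!\bra{\Psi}$ is pure on $\mathcal{H}_X\otimes\mathcal{H}_Y$ with both factors qubits, there exist $p\in[0,1]$ and orthonormal bases for which $\ket{\Psi}=\sqrt{p}\,\ket{0}_X\ket{0}_Y+\sqrt{1-p}\,\ket{1}_X\ket{1}_Y$. The structural observation I would highlight is that the nonzero eigenvalues of $\rho_X$ and $\rho_Y$ coincide and equal $\{p,1-p\}$; this is the only nontrivial input, and it is what ultimately forces the complementarity.

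Second, I would compute the common purity directly from this shared spectrum, $\mathrm{Tr}(\rho_X^2)=\mathrm{Tr}(\rho_Y^2)=p^2+(1-p)^2=1-2p(1-p)$, so that a single scalar governs both resource measures at once. This is the conceptual crux: because concurrence and first-order coherence are both functions of the purity, and the purity is basis-independent and symmetric between $X$ and $Y$, the two quantities become rigidly linked.

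Third, I would substitute into the definitions. The concurrence gives $C^2=2\bigl(1-\mathrm{Tr}(\rho_X^2)\bigr)=4p(1-p)$, while each subsystem contributes $D_X^2=D_Y^2=2\,\mathrm{Tr}(\rho_X^2)-1=1-4p(1-p)$. The equality $D_X=D_Y$, again a consequence of the shared reduced spectrum, collapses the symmetrized definition in Eq.~\eqref{eq:D-total-def} to $D^2=1-4p(1-p)$. Adding the two expressions yields $C^2+D^2=1$ identically in $p$, and since every pure two-qubit state admits such a decomposition, the identity holds in general.

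I expect no genuine obstacle here: the argument is a one-parameter computation once the Schmidt form is in hand. The only two points that merit explicit justification are the equalities $\mathrm{Tr}(\rho_X^2)=\mathrm{Tr}(\rho_Y^2)$ and $D_X=D_Y$, both of which follow immediately from the coincidence of reduced spectra for pure states; after that, the elementary algebraic identity $4p(1-p)+\bigl(1-4p(1-p)\bigr)=1$ closes the proof for every $p\in[0,1]$.
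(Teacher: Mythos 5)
Your proposal is correct and follows essentially the same route as the paper's Appendix derivation: Schmidt decomposition, shared reduced spectrum $\{p,1-p\}$, common purity $1-2p(1-p)$, and the algebraic cancellation $4p(1-p)+\bigl(1-4p(1-p)\bigr)=1$. No discrepancies or gaps.
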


\subsection{Basis independence and local unitaries}

The derivation above was carried out in the Schmidt basis, but the result is basis independent because both $C$ and $D$ are invariant under local unitaries. The concurrence $C$ depends only on the eigenvalues of $\rho_X$ (or $\rho_Y$), which are unchanged under transformations of the form $U_X \otimes U_Y$. Likewise, the purities $\mathrm{Tr}(\rho_X^2)$ and $\mathrm{Tr}(\rho_Y^2)$ are unitarily invariant, so $D_X$, $D_Y$, and hence $D$ are also invariant under local unitaries. Therefore $C^2 + D^2$ is a scalar quantity that depends only on the spectrum of the reduced state, and the equality $C^2 + D^2 = 1$ holds in any local basis.

In particular, when we evaluate $C^2$ and $D^2$ for the bipartite subsystems appearing in different quantum reference frames in the main text, we are always in the setting of Proposition~\ref{prop:C2plusD2}. The frame change may alter which pair of subsystems we consider, and thus the reduced states $\rho_X$ and $\rho_Y$, but for each pure bipartite cut the same complementarity relation holds. This is the precise mathematical content of the statement that the scalar ``total quantumness'' $C^2 + D^2$ is invariant under the QRF transformations implemented in our constructions.

\section{Explicit matrix representations and subspace invariance}
\label{app:Z2-matrices}

For $G = \mathbb{Z}_2$, the QRF transformation $U_{C\to A}$ is implemented by a controlled-NOT from the new frame $A$ to the other systems $B$ and $C$, followed by a SWAP exchanging $A$ and $C$:
\begin{equation}
  U_{C\to A}
  =
  \mathrm{SWAP}_{C,A} \bigl( \mathrm{CNOT}_{A\to B}\, \mathrm{CNOT}_{A\to C} \bigr).
\end{equation}
On Pauli operators, the controlled-NOTs act as $Z_A \mapsto Z_A Z_B Z_C$, while $Z_B \mapsto Z_B$ and $Z_C \mapsto Z_C$ on the targets, so that the global parity operator $P = Z_A Z_B Z_C$ is mapped to $Z_A$. The subsequent SWAP$_{C,A}$ sends $Z_A \mapsto Z_C$, and hence the even-parity subspace $\mathcal{H}_{\mathrm{phys}} = \mathrm{span}\{\ket{000}, \ket{011}, \ket{101}, \ket{110}\}$ is invariant under $U_{C\to A}$.

To make this explicit, we evaluate $U_{C\to A}$ on the physical basis. Writing the action as
\begin{equation}
  U_{C\to A}
  =
  \mathrm{SWAP}_{C,A} \circ \bigl( \mathrm{CNOT}_{A\to B}\, \mathrm{CNOT}_{A\to C} \bigr),
\end{equation}
one finds
\begin{subequations}
\begin{align}
  \ket{000} &\mapsto \ket{000},\\
  \ket{011} &\mapsto \ket{110},\\
  \ket{101} &\mapsto \ket{011},\\
  \ket{110} &\mapsto \ket{101},
\end{align}
\end{subequations}
so $U_{C\to A}$ restricts to a permutation of the basis vectors in $\mathcal{H}_{\mathrm{phys}}$. In the ordered basis $\{\ket{000}, \ket{011}, \ket{101}, \ket{110}\}$, the matrix representation is
\begin{equation}
  U_{C\to A}\big|_{\mathcal{H}_{\mathrm{phys}}}
  =
  \begin{pmatrix}
    1 & 0 & 0 & 0 \\
    0 & 0 & 1 & 0 \\
    0 & 0 & 0 & 1 \\
    0 & 1 & 0 & 0
  \end{pmatrix}.
\end{equation}
This $4\times 4$ unitary is used to compute the theoretical resource values in Table~\ref{tab:C2-D2-frames}. The transformation $U_{C\to B}$ follows by symmetry (controlling on $B$ instead of $A$ before the swap) and has an analogous permutation form on $\mathcal{H}_{\mathrm{phys}}$.

\section{From the general QRF map to the $\mathbb{Z}_2$ unitaries and operator identities}
\label{app:Z2-operator-identities}

In this appendix we derive the three-qubit $\mathbb{Z}_2$ frame-change unitaries and operator identities used in Sec.~\ref{sec:circuits} and Sec.~\ref{sec:finiteG-gate-calculus}. Starting from the general finite-group QRF unitary $U_{0\to i}$ in Eq.~\eqref{eq:finite-group-QRF-unitary} and the observable map in Eq.~\eqref{eq:obs-transform}, we focus on $G=\mathbb{Z}_2$ acting on three systems $(A,B,C)$ and obtain explicit expressions for the frame-change unitaries $U_{C\to B}$ and $U_{C\to A}$, as well as for the transformed gates $H_A^{(B)}$, $(\mathrm{CNOT}_{A\to B})^{(B)}$, $H_A^{(A)}$, and $(\mathrm{CNOT}_{A\to B})^{(A)}$.

\subsection{From the general QRF unitary to the three-qubit $\mathbb{Z}_2$ maps}
\label{app:Z2-U}

The perspective-neutral change-of-frame unitary used in the main text is
\begin{equation}
  U_{0\to i}
  =
  \mathrm{SWAP}_{0,i}
  \sum_{g\in G}
  \ket{g}\!\bra{g}_{\,i}
  \otimes \openone_{0}
  \otimes
  \bigl[U_R(g)\bigr]^{\otimes (n-2)},
  \label{eq:appC-U-general}
\end{equation}
where $U_R$ denotes the right-regular representation of $G$ on each non-reference system~\cite{Vanrietvelde2020ChangePerspective,DeLaHamette2020QRFGeneralGroups}. 

In our three-qubit model we take $G=\mathbb{Z}_2=\{0,1\}$ with group operation given by addition modulo $2$ and
\begin{equation}
  U_R(0)=\openone,\qquad U_R(1)=X,
  \label{eq:appC-Z2-reps}
\end{equation}
where $X$ is the Pauli-$X$ operator on a qubit. The systems $(A,B,C)$ each carry a copy of the regular representation; we regard $C$ as the laboratory frame and $A,B$ as systems that can serve as internal frames.

For a change of frame $C\to B$ we identify $(0,i,R)=(C,B,\{A\})$ in Eq.~\eqref{eq:appC-U-general}, which gives
\begin{equation}
  U_{C\to B}
  =
  \mathrm{SWAP}_{C,B}
  \sum_{g\in\{0,1\}}
  \ket{g}\!\bra{g}_{\,B}
  \otimes \openone_{C}
  \otimes U_R(g)_A.
  \label{eq:appC-UCB-sum}
\end{equation}
Writing $\Pi_g^{(B)}:=\ket{g}\!\bra{g}_B$ and substituting $U_R(0)=\openone$, $U_R(1)=X$ yields
\begin{equation}
  U_{C\to B}
  =
  \mathrm{SWAP}_{B,C}\,\Bigl(\Pi^{(B)}_0\otimes\openone_A+\Pi^{(B)}_1\otimes X_A\Bigr),
  \label{eq:appC-UCB-final}
\end{equation}
which is the form used in the three-qubit circuit constructions.

Similarly, for the change of frame $C\to A$ we identify $(0,i,R)=(C,A,\{B\})$ and obtain
\begin{equation}
  U_{C\to A}
  =
  \mathrm{SWAP}_{A,C}
  \sum_{g\in\{0,1\}}
  \ket{g}\!\bra{g}_{\,A}
  \otimes \openone_{C}
  \otimes U_R(g)_B
  =
  \mathrm{SWAP}_{A,C}\,\Bigl(\Pi^{(A)}_0\otimes\openone_B+\Pi^{(A)}_1\otimes X_B\Bigr),
  \label{eq:appC-UCA-final}
\end{equation}
which is the analogous unitary for changing from frame $C$ to frame $A$. These two maps are the only frame-change unitaries required in the three-qubit examples discussed in the main text.

\subsection{Heisenberg-picture operator map}
\label{app:Z2-Heisenberg}

The observable map between frames is given by the Heisenberg-picture conjugation rule
\begin{equation}
  Z^{(\mathrm{new})} = U\,Z^{(\mathrm{old})}\,U^\dagger,
  \label{eq:appC-heisenberg}
\end{equation}
with \( U \) one of the frame-change unitaries introduced above. This is the specialisation of Eq.~\eqref{eq:obs-transform} to our finite-dimensional setting. In what follows we consider the descriptions in \( B \)'s frame and in \( A \)'s frame, corresponding to the choices \( U = U_{C\to B} \) and \( U = U_{C\to A} \), respectively. All operator identities derived below, Eqs.~\eqref{eq:appC-HA-B-final}–\eqref{eq:appC-CNOT-A-final}, result from applying Eq.~\eqref{eq:appC-heisenberg} to the elementary gates \( H_A \) and \( \mathrm{CNOT}_{A\to B} \).

In each case it is convenient to factor the frame-change unitaries as
\begin{equation}
  U_{C\to B} = S_{BC} W_{BA}, \qquad
  U_{C\to A} = S_{AC} W_{AB},
  \label{eq:appC-factorization}
\end{equation}
where \( S_{XY} \) is the SWAP between systems \( X \) and \( Y \), and
\begin{equation}
  W_{BA} = \Pi^{(B)}_0\otimes\openone_A + \Pi^{(B)}_1\otimes X_A, \qquad
  W_{AB} = \Pi^{(A)}_0\otimes\openone_B + \Pi^{(A)}_1\otimes X_B.
  \label{eq:appC-W-defs}
\end{equation}
Because the single-qubit gates under consideration act trivially on the control system and any spectator, they commute with the SWAP, and the nontrivial part of the calculation reduces to conjugation by \( W_{BA} \) or \( W_{AB} \).

\subsection{Derivation of \( H_A^{(B)} \) and \( (\mathrm{CNOT}_{A\to B})^{(B)} \)}
\label{app:Z2-B-frame}

\paragraph{Hadamard on \( A \) in \( B \)'s frame.}

We start from a Hadamard on \( A \) in \( C \)'s frame, \( H_A^{(C)} \equiv H_A \otimes \openone_{BC} \). The corresponding operator in \( B \)'s frame is
\begin{equation}
  H_A^{(B)} = U_{C\to B}\,H_A^{(C)}\,U_{C\to B}^\dagger
  = S_{BC}\,\bigl(W_{BA} H_A W_{BA}^\dagger\bigr)\,S_{BC}^\dagger,
  \label{eq:appC-HA-B-setup}
\end{equation}
where we used \( U_{C\to B}^\dagger = W_{BA}^\dagger S_{BC} \) and the fact that \( H_A \) commutes with \( S_{BC} \).

Using the projector decomposition of \( W_{BA} \),
\begin{equation}
  W_{BA}
  = \sum_{g\in\{0,1\}} \Pi^{(B)}_g\otimes U_R(g)_A
  = \Pi^{(B)}_0\otimes\openone_A + \Pi^{(B)}_1\otimes X_A,
  \label{eq:appC-WBA-sum}
\end{equation}
we obtain
\begin{align}
  W_{BA} H_A W_{BA}^\dagger
  &=
  \sum_{g,h\in\{0,1\}} \Pi^{(B)}_g \Pi^{(B)}_h \otimes U_R(g)_A H_A U_R(h)_A^\dagger \nonumber\\
  &=
  \sum_{g\in\{0,1\}} \Pi^{(B)}_g \otimes U_R(g)_A H_A U_R(g)_A^\dagger \nonumber\\
  &=
  \Pi^{(B)}_0\otimes H_A + \Pi^{(B)}_1\otimes X_A H_A X_A.
  \label{eq:appC-WBA-conjugation}
\end{align}
Conjugation by \( S_{BC} \) simply exchanges the labels \( B \leftrightarrow C \) in the projectors, so that
\begin{equation}
  H_A^{(B)} = \ket{0}\!\bra{0}_C\otimes H_A + \ket{1}\!\bra{1}_C\otimes (X_A H_A X_A),
  \label{eq:appC-HA-B-final}
\end{equation}
which coincides with the expression used in the main text. In words, in \( B \)'s frame the Hadamard on \( A \) becomes a \( C \)-controlled single-qubit gate whose branches are related by conjugation with \( X_A \).

\paragraph{CNOT from \( A \) to \( B \) in \( B \)'s frame.}

For the two-qubit gate
\begin{equation}
  \mathrm{CNOT}_{A\to B} = \Pi^{(A)}_0\otimes\openone_B + \Pi^{(A)}_1\otimes X_B,
  \label{eq:appC-CNOT-def}
\end{equation}
the image in \( B \)'s frame is
\begin{equation}
  \bigl(\mathrm{CNOT}_{A\to B}\bigr)^{(B)}
  = U_{C\to B}\,\mathrm{CNOT}_{A\to B}\,U_{C\to B}^\dagger
  = S_{BC}\,\bigl(W_{BA}\,\mathrm{CNOT}_{A\to B}\,W_{BA}^\dagger\bigr)\,S_{BC}^\dagger.
  \label{eq:appC-CNOT-B-setup}
\end{equation}
To determine \( W_{BA}\,\mathrm{CNOT}_{A\to B}\,W_{BA}^\dagger \), it is convenient to act on a basis state \( \ket{a}_A\ket{b}_B \) with \( a,b\in\{0,1\} \), suppressing the spectator \( C \). First \( W_{BA} \) maps
\[
  \ket{a}_A\ket{b}_B \longmapsto \ket{a\oplus b}_A\ket{b}_B,
\]
because \( X_A \) is applied iff \( b=1 \). Then \( \mathrm{CNOT}_{A\to B} \) flips \( b \) iff \( a\oplus b=1 \), giving
\[
  \ket{a\oplus b}_A\ket{b}_B \longmapsto \ket{a\oplus b}_A\ket{a}_B.
\]
Finally, \( W_{BA}^\dagger = W_{BA} \) is applied once more, flipping \( A \) iff the (new) value of \( B \) is \( a \):
\[
  \ket{a\oplus b}_A\ket{a}_B \longmapsto \ket{b}_A\ket{a}_B.
\]
Thus the net effect of \( W_{BA}\,\mathrm{CNOT}_{A\to B}\,W_{BA}^\dagger \) is to swap \( A \) and \( B \),
\begin{equation}
  W_{BA}\,\mathrm{CNOT}_{A\to B}\,W_{BA}^\dagger = \mathrm{SWAP}_{A,B}.
  \label{eq:appC-CNOT-is-SWAP}
\end{equation}
Conjugation by \( S_{BC} \) then relabels \( B \leftrightarrow C \), so that
\begin{equation}
  \bigl(\mathrm{CNOT}_{A\to B}\bigr)^{(B)} = \mathrm{SWAP}_{A,C},
  \label{eq:appC-CNOT-B-final}
\end{equation}
as used in the main text. In \( B \)'s frame, the interaction between \( A \) and \( B \) is therefore seen as an interaction between \( A \) and the original laboratory frame \( C \).

\subsection{Derivation of $H_A^{(A)}$ and $(\mathrm{CNOT}_{A\to B})^{(A)}$}
\label{app:Z2-A-frame}

We take $U_{C\to A} = S_{AC} W_{AB}$ and write
\begin{equation}
  H_A^{(A)} = U_{C\to A}\,H_A^{(C)}\,U_{C\to A}^\dagger
  = S_{AC}\,\bigl(W_{AB}(H_A\otimes\openone_B)W_{AB}^\dagger\bigr)\,S_{AC}^\dagger,
  \label{eq:appC-HA-A-setup}
\end{equation}
where $H_A^{(C)} = H_A\otimes\openone_B\otimes\openone_C$. Expanding in the $A$ basis gives
\begin{equation}
  H_A
  = \tfrac{1}{\sqrt{2}}\Bigl(
    \ket{0}\!\bra{0}_A
    + \ket{0}\!\bra{1}_A
    + \ket{1}\!\bra{0}_A
    - \ket{1}\!\bra{1}_A
  \Bigr).
  \label{eq:appC-HA-expansion}
\end{equation}
Using
\begin{equation}
  W_{AB}
  = \Pi^{(A)}_0\otimes\openone_B + \Pi^{(A)}_1\otimes X_B,
  \label{eq:appC-WAB-def}
\end{equation}
one finds
\begin{align}
  W_{AB}(H_A\otimes\openone_B)W_{AB}^\dagger
  &=
  \tfrac{1}{\sqrt{2}}\Bigl[
    (\ket{0}\!\bra{0}_A - \ket{1}\!\bra{1}_A)\otimes \openone_B
    + (\ket{0}\!\bra{1}_A + \ket{1}\!\bra{0}_A)\otimes X_B
  \Bigr].
  \label{eq:appC-WAB-conjugation}
\end{align}
Conjugation by $S_{AC}$ swaps $A$ and $C$ in the projectors, so that
\begin{equation}
  H_A^{(A)}
  =
  \tfrac{1}{\sqrt{2}}\Bigl[
    (\ket{0}\!\bra{0}_C - \ket{1}\!\bra{1}_C)\otimes \openone_B
    + (\ket{0}\!\bra{1}_C + \ket{1}\!\bra{0}_C)\otimes X_B
  \Bigr],
  \label{eq:appC-HA-A-final}
\end{equation}
which is Eq.~\eqref{eq:appC-HA-A-final} . In $A$’s frame, the Hadamard on $A$ is thus seen as a two-qubit gate $V_{BC}$ that entangles $B$ and $C$.

For the CNOT we start from
\begin{equation}
  \mathrm{CNOT}_{A\to B}
  = \Pi^{(A)}_0\otimes\openone_B + \Pi^{(A)}_1\otimes X_B.
  \label{eq:appC-CNOT-A-start}
\end{equation}
In this case $W_{AB}$ coincides with $\mathrm{CNOT}_{A\to B}$,
\begin{equation}
  W_{AB} = \mathrm{CNOT}_{A\to B},
  \label{eq:appC-WAB-is-CNOT}
\end{equation}
so $W_{AB}$ commutes with $\mathrm{CNOT}_{A\to B}$ and
\begin{equation}
  W_{AB}\,\mathrm{CNOT}_{A\to B}\,W_{AB}^\dagger
  = \mathrm{CNOT}_{A\to B}.
  \label{eq:appC-CNOT-commutes}
\end{equation}
The transformed operator in $A$’s frame is therefore obtained by swapping $A$ and $C$:
\begin{equation}
  \bigl(\mathrm{CNOT}_{A\to B}\bigr)^{(A)}
  = S_{AC}\,\mathrm{CNOT}_{A\to B}\,S_{AC}^\dagger
  = \mathrm{CNOT}_{C\to B},
  \label{eq:appC-CNOT-A-final}
\end{equation}
which reproduces Eq.~\eqref{eq:appC-CNOT-A-final} in the main text. Thus in $A$’s frame the two-qubit interaction is between $C$ and $B$, with $C$ as the control.

\end{document}